\newtheorem{theorem}{Theorem}
\newtheorem{lemma}{Lemma}
\newcommand{\remove}[1]{}
\newcommand{\Ra}{\Rightarrow}
\newcommand{\ra}{\rightarrow}
\newcommand{\CC}{{L}}
\newcommand{\CB}{{L_B}}
\newcommand{\C}{C}
\begin{document}
% Title portion. Note the short title for running heads 
%\title[Constrained Optimization Problems]{Applying Predicate Detection to the Constrained Optimization Problems}
\title{Removing Sequential Bottleneck of Dijkstra's Algorithm for the Shortest Path Problem\thanks{
Supported by NSF CNS-1812349, CNS-1563544, and the Cullen Trust for Higher Education Endowed Professorship}
}
%\author{Submission 41}
%\title[Searching in Lattices for Optimal Pricing] {Searching in Lattices for Optimal Pricing}
\author{Vijay K. Garg,\\
  The University of Texas at Austin,\\
  Department of Electrical and Computer Engineering,\\
  Austin, TX 78712, USA}

\bibliographystyle{plainurl}% the recommended bibstyle
%\bibliography{refs}

%\keywords{Stable Matching \and lattice-linear Predicates \and Distributive Lattices}
%\keywords{}% mandatory: Please provide 1-5 keywords
% Author macros::end %%%%%%%%%%%%%%%%%%%%%%%%%%%%%%%%%%%%%%%%%%%%%%%%%

\maketitle

\begin{abstract}
All traditional methods of computing shortest paths depend upon edge-relaxation where
the cost of reaching a vertex from a source vertex is possibly decreased if that edge is used.
We introduce a method which maintains lower bounds as well as upper bounds for
reaching a vertex. This method enables one to find the optimal cost for multiple
vertices in one iteration and thereby reduces the sequential bottleneck in Dijkstra's algorithm.

We present four algorithms in this paper --- $SP_1$, $SP_2$, $SP_3$ and $SP_4$.
$SP_1$ and $SP_2$ reduce the number of heap operations in Dijkstra's algorithm.
For directed acyclic graphs, or directed unweighted graphs they have the optimal complexity of $O(e)$ where $e$ is the number
of edges in the graph which is better than that of Dijkstra's algorithm.
For general graphs, their worst case complexity matches that of Dijkstra's algorithm
for a sequential implementation but allows for greater parallelism.
Algorithms $SP_3$ and $SP_4$ allow for even more parallelism but with higher work complexity.
Algorithm $SP_3$ requires $O(n + e(\max(\log n, \Delta)))$ work where $n$ is the number of vertices and $\Delta$ is the maximum in-degree
of a node. Algorithm $SP_4$ has the most parallelism. It requires $O(ne)$ work. These algorithms generalize the work 
by Crauser, Mehlhorn, Meyer, and Sanders on parallelizing Dijkstra's algorithm.
%Our algorithm has the same work complexity as Dijkstra's algorithm but decreases the
%number of 
%
%an edge is used to reduce the cost of a path from the 
 \end{abstract}

\begin{IEEEkeywords}
Single Source Shortest Path Problem, Dijkstra's Algorithm
\end{IEEEkeywords}

%\maketitle
%todo
% pre vs (u,v) \in E
% sp3: C needs to be updated when a vertex is fixed: include entire algorithm?

\section{Introduction}

The single source shortest path (SSSP) problem has wide applications in transportation, networking and many other fields.
The problem takes as input a weighted directed graph with $n$ vertices and $e$ edges.
%, and a distinguished vertex called the {\em source vertex}.
We are required to find $cost[x]$, the minimum cost of a path from the {\em source} vertex $v_0$ 
to all other vertices $x$ where the cost of a path is defined as the sum of edge weights along that path.
%The goal is to find the minimum cost of a path from the source vertex to all other vertices.
%Consider a weighted directed graph with $n$ vertices numbered $0$ to $n-1$.  
We assume that all edge weights are {\em strictly positive} throughout this paper.
%We are given a distinguished {\em source} vertex $s$ and 

Most SSSP algorithms are inspired by Dijkstra's algorithm \cite{Dijkstra1959} or Bellman-Ford \cite{bellman1958routing,ford56}.
We present four algorithms in this paper in increasing order of work complexity. Algorithms $SP_1$, $SP_2$ and $SP_3$ are inspired by Dijkstra's algorithm
and $SP_4$ is inspired by Bellman-Ford algorithm.
% (or with the modification by Johnson).
Algorithms $SP_1$ and $SP_2$ are suitable for sequential implementations. They improve upon Dijkstra's algorithm
by reducing the total number of heap operations. 
For acyclic graphs, $SP_1$ performs no heap operations (except for the insertion of the initial source vertex)
and has the time complexity of $O(e)$.
Hence, it unifies Dijkstra's algorithm with the topological sorting based algorithm for acyclic graphs.
$SP_2$ has the optimal time complexity of $O(e)$ whenever the input graph is acyclic or unweighted.
%It performs no heap operations whenever the input graph is acyclic or unweighted.
%Hence, it unifies Dijkstra's algorithm with BFS
%For acyclic graphs, $SP_1$ reduces to a BFS algorithm based on topological sort. 
%Dijkstra's algorithm requires insertion of all reachable vertices in the heap, whereas these algorithms insert only 
%These algorithm insert only those vertices in the heap which have incoming {\em backward} edges, where an edge is defined
%to be backward if it goes from a vertex with higher shortest cost to  a vertex with lower shortest cost.
%
For general graphs, their worst case asymptotic complexity matches that of Dijkstra's algorithm for a sequential implementation; however,
they always perform less heap operations than Dijkstra's algorithm.
%if the input graph has characteristics
Additionally, they are more suitable for
%have a better behavior for 
a parallel implementation because
they allow multiple vertices to be explored in parallel unlike Dijkstra's algorithm which explores vertices in the order 
of their shortest cost.  Algorithm $SP_2$ allows more parallelism than $SP_1$ at the expense of an additional $O(e)$ processing.

Algorithm $SP_3$ allows for even more parallelism than $SP_2$. It uses the technique of keeping lower bounds
on $cost[x]$ for all vertices $x$.
Almost all algorithms for the shortest path problem are based on keeping upper bounds.
%Algorithms $SP_1$ and $SP_2$ are based on classical algorithms of keeping the the upper bounds
%for the cost of the shortest path. 
%We then extend $SP_2$ to algorithm $SP_3$ which 
%When 
%$SP_1$ algorithm does not require any additional processing of
%For example, 
Dijkstra's algorithm keeps $D[x]$, an upper bound on the cost of the path for any vertex $x$.
% initialized to $\infty$ for all vertices
%except the source vertex for which it is initialized to $0$.
%At every iteration of the algorithm, the vertex $x$ with the minimum value of $D$ is marked as fixed and
%all edge adjacent to $x$ are {\em relaxed}. Edge relaxation may result in decrease of $D$ values of the adjacent vertices.
%$D[x]$ is revised downwards till it becomes the cost of a shortest path from the source to $x$.
It maintains the invariant that $D[x]$ always reflects the cost of a feasible path in the directed graph from the source vertex to $x$.
Our algorithm $SP_3$ extends Dijkstra's algorithm by maintaining the variable $\C[x]$ for any vertex $x$ that gives a lower bound on the cost to reach $x$. 
The invariant we maintain is that 
any path from the source vertex to $x$ must have cost at least $\C[x]$. When $\C[x]$ is zero, the invariant is trivially true in a directed graph with
no negative weights. 
%We define a lattice-linear predicate on $\C$ that holds whenever $\C$ is feasible. 
%In contrast to Dijkstra's algorithm, there may not be any path in the graph from $s$ to $x$ with the cost equal to $\C[x]$.
At each iteration of the algorithm, we increase $\C[x]$ for one or more vertices till we reach a point where $\C$ is also {\em feasible} and corresponds to 
the cost of all shortest paths. 
The vertices that have matching upper bounds and lower bounds are
called {\em fixed} vertices and the minimum cost from the source vertex to these vertices are known. 
%We derive an algorithm that is equivalent to Dijkstra's algorithm by applying lattice-linear predicate detection algorithm.
By combining the upper bounds of Dijkstra's algorithm with the lower bounds,
we present an algorithm, $SP_3$, for the single source shortest path algorithm that is superior to Dijkstra's algorithm in two respects.
First, Dijkstra's algorithm suffers from the well-known sequential bottleneck (e.g. \cite{crauser1998parallelization, meyer2003delta}). 
Outgoing edges of only those vertices are explored (relaxed) whose distance
is the minimum of all vertices whose adjacency list have not been explored. In contrast, our algorithm explores all those vertices $x$ whose upper bounds $D[x]$
and lower bounds $\C[x]$ match and have not been explored before. Although the idea of marking multiple vertices fixed in a single iteration
has been explored before (for e.g. \cite{crauser1998parallelization}), this is the first paper, to the best of our knowledge, that marks vertices {\em fixed}
based on the idea of lower bounds.
 Second, when one is interested in a shortest path to a single destination, our algorithm
may determine that $D[x]$ is equal to $\C[x]$ much sooner than Dijkstra's algorithm. 
% (but never later than Dijkstra's algorithm).

There are two assumptions in our algorithms. First, we assume that all weights are strictly positive. This is a minor
strengthening of the assumption in Dijkstra's algorithm where all weights are assumed to be nonnegative.
The second assumption is that we have access to incoming edges for any vertex discovered during the execution of the algorithm.
Dijkstra's algorithm uses only an adjacency list of outgoing edges. This assumption is also minor in the context of static graphs. However, 
when the graph is used in a dynamic setting, it may be difficult to find the list of incoming edges. 
We assume in this paper that either the graph is static or that a vertex can be expanded in the backward direction in a dynamic graph.
%a static graph in this paper or 

The single source shortest path problem has a rich history.  One popular research direction is to improve the worst case complexity of Dijkstra's algorithm by
using different data structures. For example, by using Fibonacci heaps for the min-priority queue, Fredman and Tarjan \cite{Fredman:1987:FHU:28869.28874} gave an algorithm that takes $O(e + n \log n)$. There are many algorithms that run faster when weights are small integers bounded by some constant $\gamma$.
For example, Ahuja et al \cite{ahuja1990faster} gave an algorithm that uses Van Emde Boas tree as the priority queue to give an algorithm
that takes $O(e \log \log \gamma)$ time. Thorup \cite{thorup2000ram} gave an implementation that takes $O(n+e \log \log n)$ under special constraints on the weights.
Raman \cite{Raman:1997:RRS:261342.261352} gave an algorithm with $O(e + n  \sqrt{\log n \log \log n})$ time.
Our algorithms do not improve the worst case sequential complexity of the
problem, but reduce the sequential bottleneck. Our algorithms also reduce the number of priority queue operations in the average case.

It is also interesting to compare our approach with algorithm $A^*$ \cite{Hart68}. The algorithm $A^*$ is applicable when there is a single target vertex and
there is a heuristic function $h(x)$ for any vertex that provides the lower bound from $x$ to the target vertex. The heuristic function assumes that
there is some background knowledge that provides the lower bound to the target. 
Our algorithms are not based on a target vertex or availability of the background knowledge.
Even though $A^*$ also uses the notion of lower bounds, the usage is different.
We use the lower bound from the source vertex to $x$ in our algorithms and not the lower bound from $x$ to the target vertex.

There are many related works for parallelizing Dijkstra's algorithm. The most closely related work is 
Crauser et al \cite{crauser1998parallelization} which gives three methods to 
improve parallelism.
These methods,  in-version, out-version and in-out-version, allow multiple vertices to be marked as fixed instead of just the one with the minimum $D$ value.
The in-version marks as fixed any vertex $x$ such that $D[x] \leq \min \{D[y] ~|~ \neg fixed(y) \} + \min \{ w[v,x] ~|~ \neg fixed(x) \}$.
This method is a special case of our algorithm $SP_2$.
The implementation of in-version in \cite{crauser1998parallelization} requires an additional priority queue and the total number of heap operations 
increases by a factor of $2$ compared to Dijkstra's algorithm even though it allows greater parallelism.
 Our algorithm $SP_2$ uses fewer heap operations than Dijkstra's algorithm.
The out-version in \cite{crauser1998parallelization} works as follows.
 Let $L$ be defined as $\min \{ D[x] + w[x,y] ~|~ \neg fixed(x) \}$. Then,
the out-version marks as fixed all vertices that have $D$ value less than or equal to $L$.
Our method is independent of this observation and we incorporate out-version in algorithms
$SP_3$ and $SP_4$.
The in-out-version is just the use of in-version as well as out-version in conjunction.

%These methods are special case of our algorithm $SP_3$ as shown by Lemma \ref{lem:crauser}.
%As a result, the analysis presented in \cite{crauser1998parallelization} is also applicable for our method.
%For example, they define a delete-phase to be the phase of the algorithm in which all the vertices marked as fixed are explored in
%parallel (and are deleted from the graph). In a random graph on $n$ vertices in which every edge is present with probability $d/n$ and
%the edge weights are uniformly distributed in $[0,1]$, they show that $O(\sqrt{n})$ delete phases are sufficient with high probability when 
%the out-version is used. It follows that the same result holds for $SP_3$ since the out-version is a special case of $SP_3$.

A popular practical parallel algorithm for SSSP is $\Delta$-stepping algorithm due to Meyer and Sanders \cite{meyer2003delta}.
Meyer and Sanders also provide an excellent review of prior parallel algorithms in \cite{meyer2003delta}. 
They classify SSSP algorithms as either {\em label-setting}, or {\em label-correcting}. Label-setting algorithms, such as Dijkstra's algorithm,
relax edges only for  fixed vertices. Label-correcting algorithms may relax edges even for non-fixed vertices.
%Our algorithms $SP_1$, $SP_2$ and $SP_3$ are label-setting and $SP_4$ is label-correcting.
$\Delta$-stepping algorithm is a label-correcting algorithm in which eligible non-fixed vertices are kept in an array of buckets
such that each bucket represents a distance range of $\Delta$. During each phase, the algorithm removes all vertices of the first non-empty
bucket and relaxes all the edges of weight at most $\Delta$. Edges of higher weights are relaxed only when their starting vertices are fixed.
The parameter $\Delta$ provides a trade-off between the number of iterations and the work complexity. For example, when $\Delta$ is $\infty$, 
the algorithm reduces to Bellman-Ford algorithm where any vertex that has its $D$ label changed is explored. When $\Delta$ equals $1$ for
integral weights, the algorithm is a variant of Dijkstra's algorithm.
They show that by taking $\Delta = \Theta(1/d)$ where $d$ is the maximum degree of a  graph on $n$ vertices, and random edge weights that are uniformly distributed in 
$[0,1]$, their algorithm takes $O(n+e+dM)$ where $M$ is the maximum shortest path weight from the source vertex to any other vertex.
There are many practical large-scale implementations of the $\Delta$-stepping algorithm (for instance, by
Madduri et al \cite{madduri2007experimental}) in which authors have shown the scalability of the algorithm.
Chakravarthy et al \cite{chakaravarthy2017scalable} give another scalable implementation of an algorithm that is a hybrid of the Bellman-Ford algorithm and
the $\Delta$-stepping algorithm. 
The $\Delta$-stepping technique is orthogonal to our method which is based on keeping lower bounds with vertices. 
It is possible to apply $\Delta$-stepping in conjunction with our method.

%Maleki et al \cite{maleki2016dsmr} is yet another parallel SSSP algorithm adapted for distributed shared memory.

%Our notion of lower bounds on the cost of the shortest path is also presented in the context of lattice-linear predicates in \cite{Garg18}.
%The goal of that paper is to derive Gale-Shapley's algorithm \cite{gale1962college}, Dijkstra's algorithm \cite{Dijkstra1959} and 
%Demange, Gale, Sotomayor auction-based algorithm \cite{demange1986multi} as special cases of the lattice-linear predicate detection algorithm.
%None of the algorithms presented in this paper, $SP_1$ to $SP_4$ are presented in \cite{Garg18}.
%The goal of this paper is to improve upon Dijkstra's algorithm for parallelism.

%Roughly, instead of marking just the vertex with the minimum $D$ label as 
%fixed, their algorithm marks a window of $D$ labels to be fixed with the property that 
In summary, 
%we make the following contributions in this paper.
we present four algorithms for SSSP in this paper in order of increasing work complexity. 
We only compute the cost of the shortest paths and not the actual paths because
the standard method of keeping backward parent pointers is applicable to all of our algorithms.
Algorithm $SP_1$ counts the number
of incoming edges to a vertex that have been relaxed. When all incoming edges have been relaxed, we show that it is safe to mark this vertex as fixed.
The algorithm $SP_2$ generalizes $SP_1$ to allow even those vertices to be marked as fixed which have incoming edges from non-fixed vertices under certain conditions. Both of these algorithms have fewer heap operations than Dijkstra's algorithm for the sequential case and allow more parallelism when multiple cores are used. 
The algorithm $SP_3$ generalizes $SP_2$ further by maintaining the lower bound $C$ for each vertex.
All these algorithms are label-setting. Algorithm $SP_3$ has the same asymptotic complexity as Dijkstra's algorithm when the 
maximum in-degree of a vertex is $O(\log n)$.
 It allows even more parallelism than $SP_2$.
The algorithm $SP_4$ is a label-correcting algorithm. It has the the most parallelism but with highest work complexity.
$SP_4$ combines ideas from Bellman-Ford, Dijkstra, \cite{crauser1998parallelization} and $SP_3$
for faster convergence of $D$ and $C$ values. 

\section{Background and Notation}

%Consider a weighted directed graph with $n$ vertices numbered $0$ to $n-1$.  We assume that all edge weights are {\em strictly positive} throughout this paper.
%We are given a distinguished {\em source} vertex $s$ and 
%We are required to find $cost[x]$, the minimum cost of a path from a distinguished {\em source} vertex $v_0$ 
%to all other vertices $x$ where the cost of a path is defined as the sum of edge weights along that path.

Dijkstra's algorithm (or one of its variants) is the most popular single source shortest path algorithm used in practice.
For concreteness sake we use the version shown in Fig. \ref{fig:dijk} for comparison with our algorithm.
The algorithm also helps in establishing the terminology and the notation used in our algorithm.
We consider a directed weighted graph $(V, E, w)$ where $V$ is the set of vertices, $E$ is the set of directed edges and
$w$ is a map from the set of edges to positive reals (see Fig. \ref{fig:myGraph3} for a running example).
%For any vertex $x$, let $pre(x)$ be the set of vertices $v$ such that  $(v,x)$ is an edge in the graph.
To avoid trivialities, we assume that the graph is loop-free and every vertex  $x$, except the source vertex $v_0$,  has 
at least one incoming edge.
%nonempty $pre(x)$.

%We assume there are $n$ vertices numbered $0..n-1$. The algorithm is required to return the cost of the
%shortest path from vertex $0$ to all other vertices. 
%Every vertex in the graph goes through the following states -- 
%{\small

{\small
\begin{figure}[htb]\begin{center}
\fbox{\begin{minipage}  {\textwidth}\sf
\begin{tabbing}
\=x\=xxxx\=xxxx\=xxxx\=xxxx\= \kill
%\>vector  {\bf function} getOptimalPath2($F$: globalState)\\
\>\> {\bf var} $D$: array[$0 \ldots n-1$] of integer\\
\>\>\>\>  initially $\forall i: D[i] = \infty$;\\
\>\> \> $fixed$: array[$0 \ldots n-1$] of boolean \\
\> \> \> \> initially $\forall i: fixed[i] = false$;\\
\> \> \> $H$: binary heaps of $(j,d)$ initially empty;\\
%\>\> \> $changed$: array[$0 \ldots n-1$] of boolean initially $\forall i: changed[i] = false$;\\
\> \> $D[0] := 0$;\\
\> \> $H$.insert((0,D[0]));\\
 \>      \> {\bf while} $\neg H$.empty() {\bf do}\\
%  \> \> \> // remove the vertex $j$ with the minimum label $d$ from the heap $H$;\\
  \> \> \> $(j,d) := H$.removeMin();\\
 \> \> \> $fixed[j] := true$;\\
 \> \> \> {\bf forall} $k$: $\neg fixed(k) \wedge (j,k) \in E$ \\
  \> \> \> \> if ($D[k] > D[z]+w[z,k]$) then \\
\> \> \> \> \>$D[k] := D[z] + w[z,k]$;\\
 \> \> \> \> \> $H$.insertOrAdjust $(k, D[k])$;\\
     \>  \> {\bf endwhile};
\end{tabbing}
\end{minipage}
} % end \fbox
\end{center}
\caption{Dijkstra's algorithm to find the shortest costs from $v_0$ \label{fig:dijk}.}
\end{figure}
}

\begin{figure}[htb]
%\fbox{\begin{minipage}  {\textwidth}
\begin{center}
\input{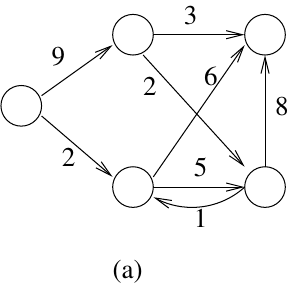_t}
\caption{\label{fig:myGraph3}A Weighted Directed Graph}
%  Used as a Running Example}
\end{center}
\end{figure}

Dijkstra's algorithm maintains $D[i]$, which is a tentative cost to reach $v_i$ from $v_0$. 
%$D$ is an upper bound on the shortest cost. 
%All vertices $x$ not  {\em discovered} have $D[x]$ as $\infty$.
Every vertex $x$  in the graph has initially $D[x]$ equal to $\infty$. 
Whenever a vertex is discovered for the first time, its $D[x]$ becomes less
than $\infty$. We use the predicate
%is initially satisfies $\neg discovered(x)$, where 
$discovered(x) \equiv D[x] < \infty$.
The variable $D$ decreases for a vertex whenever a shorter path is found due to edge relaxation.

% {\em undiscovered}. 
%Formally, a vertex is undiscovered iff
%Whenever the algorithm reaches the vertex for the
%first time, the vertex is {\em discovered} and satisfies $discovered(x)$.

In addition to the variable $D$, a boolean array {\em fixed} is maintained. Thus, every discovered vertex is either
{\em fixed} or {\em non-fixed}.
The invariant maintained by the algorithm is that
if a vertex $x$ is {\em fixed} then $D[x]$ gives the final shortest cost from vertex $v_0$ to $x$.
If $x$ is {\em non-fixed}, then $D[x]$ is the cost of the shortest path to $x$ that goes only through fixed vertices.
%The invariant is initially true because $D$ is initialized to $\infty$ and all vertices are non-fixed. Whenever any vertex
%becomes fixed, it maintains the invariant by relaxing all edges adjacent to it. 
%This invariant is maintained because a vertex is marked as fixed only when it is removed from the heap $H$
%as the vertex $x$ with the minimum value of $D[x]$
%

A heap $H$ keeps all vertices  that have been discovered but
are non-fixed along with their distance estimates $D$.
We view the heap as consisting of tuples of the form $(j, D[j])$ where the heap property is with respect to $D$ values.
The algorithm has one main {\em while} loop that removes the vertex
with the minimum distance from the heap with the method $H$.removeMin(), say $v_j$, and marks it as fixed.  It then {\em explores} the vertex $v_j$ by relaxing
all its adjacent edges going to non-fixed vertices $v_k$.
The value of $D[k]$ is updated to the minimum of
$D[k]$ and $D[j]+w[j,k]$. 
%This step is called {\em edge relaxation}. 
If $v_k$ is not in the heap, then it is inserted, else
if $D[k]$ has decreased then the label associated with vertex $k$ is adjusted in the heap. We abstract this step
as the method $H$.insertOrAdjust$(k, D[k])$. 
%Many implementations simply insert
%We assume that the method takes $O(\log n)$ time. 
The algorithm terminates when the heap is empty.
At this point there are no discovered non-fixed vertices and $D$ reflects the cost of the shortest path to all discovered vertices. 
If a vertex $j$ is not discovered then $D[j]$ is infinity
reflecting that $v_j$ is unreachable from $v_0$.

Observe that every vertex goes through the following states. Every vertex $x$ is initially {\em undiscovered} (i.e., $D[x] = \infty$).
If $x$ is reachable from the source vertex, then it is eventually {\em discovered} (i.e., $D[x] < \infty)$. 
A discovered vertex is initially {\em non-fixed}, and is therefore in the heap $H$.
Whenever a vertex is removed from the heap it is a {\em fixed} vertex. A fixed vertex may either be {\em unexplored} or {\em explored}.
Initially, a fixed vertex is unexplored. It is considered explored when all its outgoing edges have been relaxed.

The following lemma simply summarizes the well-known properties of Dijkstra's algorithm.
\begin{lemma}\label{lem:dijk}
The outer loop in Dijkstra's algorithm satisfies the following invariants.\\
(a) For all vertices $x$: $fixed[x] \Rightarrow (D[x] = cost[x])$.\\
(b) For all vertices $x$:
$D[x]$ is equal to cost of the shortest path from $v_0$ to $x$ such that all vertices in the path before $x$ are fixed.\\
(c) For all vertices $x$:
$x \in H$ iff $discovered(x) \wedge \neg fixed[x]$.
\end{lemma}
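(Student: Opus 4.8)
The plan is to prove the three invariants simultaneously by induction on the number of completed iterations of the outer \textbf{while} loop, treating one iteration (a single \texttt{removeMin} followed by the relaxation of all outgoing edges of the vertex just removed) as a single step. For the base case, right after initialization only $v_0$ is discovered, with $D[v_0]=0$ and $D[x]=\infty$ for $x\neq v_0$, no vertex is fixed, and $H=\{v_0\}$; (a) and (c) are immediate, and for (b) note that $x=v_0$ is witnessed by the empty path while for $x\neq v_0$ every path from $v_0$ to $x$ has $v_0$ as a vertex strictly before $x$, and $v_0$ is non-fixed, so no feasible path exists and $D[x]=\infty$ is correct.

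For the inductive step, assume the invariants hold at the start of an iteration and let $v_j$ be the vertex returned by \texttt{removeMin}. The heart of the proof is the claim $D[j]=cost[j]$. By (c), $v_j$ is discovered and non-fixed, so by (b) $D[j]$ is the cost of some feasible path to $j$. If some path $P$ from $v_0$ to $j$ were strictly cheaper, then $P$ would not be feasible, so it has a first vertex $v_k$ strictly before $j$ that is non-fixed; the prefix of $P$ up to $v_k$ is feasible, so by (b) $D[k]$ is at most its cost, and since all weights are strictly positive the part of $P$ from $v_k$ to $j$ is nonempty with positive total weight, giving $D[k]<cost(P)<D[j]$. But $D[k]<\infty$ and $v_k$ is non-fixed, so by (c) $v_k\in H$, contradicting the minimality of $D[j]$ over the heap. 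Hence $D[j]=cost[j]$. A nearly identical argument shows, as a side fact to be carried through the induction, that whenever a vertex becomes fixed its shortest path from $v_0$ uses only fixed vertices as interior vertices.

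Re-establishing the invariants after the step is then bookkeeping. Part (a): the freshly fixed $v_j$ satisfies $D[j]=cost[j]$ by the crux, and no previously fixed vertex changes its $D$-value or its status because the relaxation loop modifies only non-fixed neighbours and the graph is loop-free. Part (c): \texttt{removeMin} deletes $v_j$ from $H$ exactly when it becomes fixed, while \texttt{insertOrAdjust} inserts every newly discovered non-fixed neighbour and keeps the already-present ones, preserving the equivalence. Part (b): check both inequalities once the fixed set has grown by $v_j$. For the $\le$ direction, the penultimate vertex $v_p$ of any feasible path $P'$ to $x$ is fixed and so was explored when it became fixed, at which moment $D[x]\le D[p]+w[p,x]=cost[p]+w[p,x]\le cost(P')$ (or $D[x]=cost[x]$ already, if $x$ was fixed by then), and $D$ only decreases. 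For the $\ge$ direction, maintain the standing sub-fact that $D[x]$ is always the cost of an actual feasible path; this survives a relaxation across $(v_j,x)$ because the shortest path to the now-fixed $v_j$ is all-fixed in its interior (by the side fact above), so prepending it to the edge $(v_j,x)$ is a feasible path of cost $D[j]+w[j,x]$.

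I expect the only genuinely non-routine step to be the crux, i.e., establishing the safety of the greedy choice made by \texttt{removeMin}; that is precisely where strict positivity of the edge weights is used in conjunction with invariants (b) and (c). Everything else reduces to tracking heap membership and identifying which $D$-values a single loop iteration can alter.
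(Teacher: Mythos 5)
Your proof is correct. Note, however, that the paper does not actually prove Lemma~\ref{lem:dijk} at all: it introduces it with the remark that it ``simply summarizes the well-known properties of Dijkstra's algorithm'' and moves on, so there is no argument in the text to compare against. What you supply is the standard simultaneous induction over iterations of the outer loop, with the greedy-choice safety ($D[j]=cost[j]$ for the vertex returned by \texttt{removeMin}) as the crux, and all three invariants are correctly re-established after the step, including the two easily-overlooked cases (the target $x$ already fixed when its penultimate vertex is explored, and the feasibility of the witness path after relaxing $(v_j,x)$ via the all-fixed shortest path to $v_j$). One small quibble: strict positivity of the weights is not actually needed where you invoke it. In the crux you derive $D[k]<cost(P)<D[j]$, but the weaker chain $D[k]\leq cost(\mathrm{prefix})\leq cost(P)<D[j]$, which holds under mere nonnegativity, already contradicts the minimality of $D[j]$ in the heap; this is consistent with the paper's own observation that Dijkstra's algorithm only requires nonnegative weights (strict positivity is needed later, for the $pred$-based fixing rule of $SP_1$, not here).
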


\section{Algorithm $SP_1$: Using Predecessors}

Dijkstra's algorithm finds the vertex with the minimum tentative distance and
marks it as a fixed vertex. This is the only mechanism by which a vertex
is marked as {\em fixed} in Dijkstra's algorithm. Finding the non-fixed vertex with the minimum $D$ value
takes $O(\log n)$ time when a heap or its variant is used.
Our first observation is that if for any non-fixed vertex $x$, if all the incoming edges are from fixed vertices, then
the current estimate $D[x]$ is the shortest cost. 
To exploit this observation, 
we maintain with each vertex $i$, a variable $pred[i]$ that keeps the number of incoming edges
that have not been relaxed. The variable $pred[i]$ is decremented whenever an incoming edge to vertex $i$  is relaxed.
When $pred[i]$ becomes zero, vertex $i$ becomes fixed. Determining a vertex to be fixed by this additional method increases the rate
of marking vertices as fixed in any iteration of the while loop.

The second observation is that in Dijkstra's algorithm vertices are explored only in order of their cost.
$SP_1$ explores vertices whenever it finds one that is fixed.
%This feature of Dijkstra's algorithm results in a sequential bottleneck. 
Hence, in addition to 
the heap $H$, we maintain a set $R$ of vertices which have been fixed but not explored, i.e., their adjacency lists have not been traversed.
%The set $R$ satisfies  the invariant that $R$ contains all unexplored fixed vertices.
We also relax the invariant on the heap $H$. In Dijkstra's algorithm, the heap does not contain fixed vertices. In algorithm $SP_1$,
the heap $H$ may contain both fixed and non-fixed vertices. However, only those fixed vertices  which
have been {\em explored} may exist in the heap.

{\small
\begin{figure}[htb]\begin{center}
\fbox{\begin{minipage}  {\textwidth}\sf
\begin{tabbing}
\=x\=xxxx\=xxxx\=xxxx\=xxxx\=xxxx\=xxxx\=\kill
%\>globalState  {\bf function} getOptimalPath2()\\
\>\> {\bf var} $D$: array[$0 \ldots n-1$] of integer \\
\> \> \> \> initially $\forall i: D[i] = \infty$;\\
\> \> \> $H$: binary heap of $(j,d)$ initially empty;\\
\>\> \> $fixed$: array[$0 \dots n-1$] of boolean\\
\> \>\>\> initially $\forall i: fixed[i] = false$;\\
%\>\> \> $changed$: array[$0 \ldots n-1$] of boolean initially $\forall i: changed[i] = false$;\\
\> \> \> $Q, R$: set of vertices initially empty;\\
\> \> \> $pred$: array[$0 \ldots n-1$] of integer\\
\> \>\>\> initially $\forall i: pred[i] = ~|~ \{ x ~|~ (x, v_i) \in E \} ~|~$;\\
%\>\> \> $in$: array[$0 \dots n-1$] of boolean initially $\forall i: fixed[i] = false$;\\

\\
\> \> $D[0] := 0$;\\
\> \> $H$.insert$((0, D[0]))$;\\
 \>      \> {\bf while} $\neg H$.empty() {\bf do}\\
\> \> \>  $(j,d) := H$.removeMin();\\
\> \> \>  if ($\neg fixed[j]$) then  \\
\> \> \> \> $R$.insert($j$); \\
%\> \> \> $D[j] := max(D[j], d)$;\\
\> \>\> \> $fixed[j]$ := $true$; \\
%\>\> \> $Q := \{ \}$; \\
\>  \> \>  \> {\bf while} $R \neq \{\}$ {\bf do}\\
\>  \> \>  \> \> {\bf forall} $z \in R$\\
\> \>\>   \> \> \>  $R$.remove($z$);\\
\> \>  \> \>  \> \> {\bf forall} $k: \neg fixed(k) \wedge (z,k) \in E$:\\
 \> \>  \> \>  \> \> \> processEdge1($z,k$);\\
\> \>    \>  \> {\bf endwhile};\\
 % \> \>  \> {\bf while} $Q \neq \{\}$ {\bf do}\\
\> \> \>  \>  {\bf forall} $z \in Q$:\\
 \> \> \>  \> \> $Q$.remove($z$);\\
 \> \> \>  \> \>  if $\neg fixed[z]$ then\\
 \> \> \>  \> \> \>  $H$.insertOrAdjust $(z, D[z])$;\\
 
  \>  \> {\bf endwhile};\\
  \\
\> {\bf procedure} processEdge1($z,k$);\\
\>\> {\bf var} $changed$: boolean initially false;\\
\> \> $pred[k] := pred[k]-1$;\\
 \> \> if ($D[k] > D[z]+w[z,k]$) then \\
 \> \> \> $D[k] := D[z] + w[z,k]$;\\
 \> \> \>   $changed$ := true; \\

% \> \> if ($D[k] > D[z]+w[z,k]$) then \\
% \> \> \>$D[k] := D[z] + w[z,k]$;\\
% \> \> \>$changed$ := true;\\

 \> \> if $(pred[k] = 0)$ then\\
 \> \> \>    $fixed[k] := true$;\\
 \> \> \>  $R$.insert($k$); \\
% \> \> \>if $H.exists(k)$ then $H.remove(k)$;\\
% \> \> \> $fixed[k] := true$;\\
% \> \> \>  $R := R \cup \{ k \};$\\
   \>\>  else if   ($changed \wedge (k \not \in Q)$) then\\
   \> \> \>  $Q$.insert($k$);
\end{tabbing}
\end{minipage}
} % end \fbox
\end{center}
\caption{Algorithm $SP_1$  \label{fig:alg-sp1}}
\end{figure}
}

The algorithm $SP_1$ is shown in Fig. \ref{fig:alg-sp1}. The algorithm starts with the insertion of the source vertex with its $D$ value as $0$ 
in the heap.
Instead of removing the minimum vertex from the heap in each iteration and then exploring it, the algorithm consists
of two {\em while} loops. The outer while loop removes one vertex from the heap.
If this vertex is fixed, then it has already been explored and therefore it is skipped; otherwise,
it is marked as fixed and inserted in $R$ to start the inner while loop. The inner loop keeps processing the set $R$ till
it becomes empty. 

We do not require that
vertices in $R$ be explored in the order of  their cost. If $R$ consists of multiple vertices then all of them can be explored
in parallel. During this exploration other non-fixed vertices may become fixed. These are then added to $R$.
%This is especially true for
%sparse graphs where in-degree of vertices may be small. In particular,  whenever a vertex with in-degree of $1$
%is discovered, it always becomes fixed. This vertex is never inserted in the heap, it simply goes to $R$ for exploration
%in the next iteration. 
Some vertices may initially be non-fixed but eventually while processing $R$ may become fixed.
To avoid the expense of inserting these vertices in the heap, we collect all such vertices which may need to be inserted or adjusted
in the heap in a separate set called $Q$. Only, when we are done processing $R$, we call $H.insertOrAdjust$ on vertices in $Q$.

The vertices $z \in R$ are explored as follows. We process all out-going adjacent edges $(z,k)$ of the vertex $z$ to non-fixed vertices $k$.
This step is called {\em processEdge1} in Fig. \ref{fig:alg-sp1}.
First, we decrement the count $pred[k]$ to account for its predecessor $z$ being fixed.
Then, we do the standard edge-relaxation procedure by checking whether $D[k]$ can be decreased by taking this edge.
If $pred[k]$ is zero, 
$k$ is marked as fixed. Setting $fixed[k]$ to true also removes it effectively from the heap
%If $k$ is in the heap $H$, then it is removed by simply setting $fixed[k]$ to be true
because whenever a fixed vertex is extracted in the outer while loop it is skipped.
%To maintain the invariant that only explored fixed vertices can exist in the heap, the vertex $k$ is
%inserted in $R$ for exploration.

Finally, if $D[k]$ has decreased and $pred[k]$ is greater than $0$, we insert it in $Q$ so that 
once $R$ becomes empty we can call $H$.insertOrAdjust() method on vertices in $Q$
%or adjust the label of this vertex in the heap.
%If the label $D[k]$ has decreased then the label in adjusted in the heap $H$.

%{\bf Remark}: In $SP_1$, we have removed from the heap only one vertex with the minimum label.
%It is easy to see that if there are multiple vertices with the same minimum label, then all of them can be removed 
%and inserted in $R$. 
% This optimization allows greater parallelism for exploration. Since this optimization is well-known,
%we have not shown it in our algorithms for simplicity. It can easily be incorporated without increasing the overall
%time complexity. 
%Moreover, this observation can be generalized even further. If there

%\begin{figure}[htbp]
%\vspace*{-0.2in}
%\begin{center}
%%\begin{minipage}  {3in}
%\input{figs/myGraph3.pdf_t}
%%\includegraphics[height=1.0in]{myGraph.pdf}
%\caption{\label{fig:myGraph3} (a) A Weighted Directed Graph  }
%\end{center}
%\vspace*{-0.2in}
%\end{figure}

Consider the graph in Fig. \ref{fig:myGraph3}(a). Initially $(0,D[0])$ is in the heap $H$. 
%The tuple $(0,0)$ indicates that the vertex $v_0$ has its key value for the
%heap as its $D$ value equal to $0$. 
Since there is only one vertex in the heap $H$, it is also the minimum. This vertex is removed and inserted in $R$ marking
$v_0$ as fixed. Now, outgoing edges of $v_0$ are relaxed. Since $pred[1]$ becomes $0$, $v_1$ is marked as fixed and added to $R$.
The vertex $v_2$ has $pred$ as $1$ and $D[2]$ as $2$ after the relaxation of edge $(v_0, v_2)$. The vertex $v_2$ is inserted in the $Q$ for later insertion in the heap. Since $R$ is not empty, 
outgoing edges of $v_1$ are relaxed.  The vertex $v_3$ is inserted in $Q$ and its $D$ value is set to  $12$. The vertex $v_4$ is  also inserted in $Q$
and its $D$ value is set to $11$. At this point $R$ is empty and we insert vertices in $Q$ in $H$ and get back to the outer while loop. 
The minimum vertex $v_2$ is removed from the heap, marked as fixed, and
inserted in $R$ for exploration.
When $v_2$ is explored, the $D$ label of $v_3$ is adjusted to $8$. When edge $(v_2, v_4)$ is relaxed, $D[4]$ is reduced to $7$.
Moreover, $pred[4]$ becomes zero and $v_4$ is inserted in $R$ for exploration. When $v_4$ is explored, $pred[3]$ also becomes zero
and is also inserted in $R$. Once $v_3$ is explored, $R$ becomes empty. We then go to the outer while loop. All vertices in the heap are
fixed and therefore the algorithm terminates with $D$ array as $[0,9,2,8,7]$. Observe that it is easy to maintain a count of the non-fixed vertices in the
the heap and the method $H.empty()$ can be overloaded to return true whenever this count is zero.

We now show that
\begin{lemma}\label{lem:pred}
Let $v$ be any non-fixed vertex. Suppose all incoming edges of $v$ have been relaxed, then $D[v]$ equals $cost[v]$.
\end{lemma}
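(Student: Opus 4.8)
The plan is to prove the two inequalities $D[v]\le cost[v]$ and $D[v]\ge cost[v]$ separately. The second is the routine ``upper bound'' invariant: one checks by induction on the execution steps that at every moment $D[x]$ is either $\infty$ or the length of an actual $v_0$-to-$x$ path, since $D$ starts at $0$ for $v_0$ and at $\infty$ elsewhere and is only ever modified by the relaxation $D[k]:=D[z]+w[z,k]$; hence $D[v]\ge cost[v]$ always holds. So the real content is $D[v]\le cost[v]$.

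For that, I would first record two structural facts about $SP_1$. First, an edge $(z,k)$ is relaxed only inside a call \textit{processEdge1}$(z,k)$, and such a call is made only while $z\in R$; since a vertex is placed in $R$ only when it is marked \textit{fixed}, every relaxed incoming edge of $v$ emanates from a vertex that was already fixed at the instant of relaxation. Thus the hypothesis ``all incoming edges of $v$ have been relaxed'' implies that every predecessor $z$ of $v$ is fixed, and in fact each predecessor $z$ was fixed strictly before the relaxation of $(z,v)$. Second, once a vertex $z$ is fixed its value $D[z]$ never changes again, because \textit{processEdge1} updates $D[k]$ only for \textit{non-fixed} $k$; so the value of $D[z]$ used when $(z,v)$ was relaxed equals the current value of $D[z]$.

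The crux is then the claim that a fixed vertex $z$ satisfies $D[z]=cost[z]$ from the instant it becomes fixed, which I would establish by strong induction on the order in which vertices are marked fixed, carrying along the companion invariant (the analogue of Lemma~\ref{lem:dijk}(b)) that $D[x]$ always equals the length of a shortest $v_0$-to-$x$ path whose internal vertices are all fixed. A vertex becomes fixed in one of two ways. If $v$ becomes fixed because $pred[v]$ hit $0$ in \textit{processEdge1}, we are exactly in the situation of the lemma with every predecessor fixed strictly earlier, so by the induction hypothesis $D[z]=cost[z]$ for each predecessor $z$; relaxing $(z,v)$ then forced $D[v]\le D[z]+w[z,v]=cost[z]+w[z,v]$, and taking $z$ to be the predecessor of $v$ on a shortest path — for which $cost[v]=cost[z]+w[z,v]$, since prefixes of shortest paths are shortest — gives $D[v]\le cost[v]$, hence equality. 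If instead the vertex becomes fixed by being extracted as the heap minimum, one runs the standard Dijkstra argument using the companion invariant: any $v_0$-to-$x$ path must leave the set of fixed vertices through some non-fixed vertex whose $D$-value is already at least $D[x]$.

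The step I expect to be the main obstacle is precisely the heap-minimum case together with maintaining the companion invariant, because in $SP_1$ edges are relaxed in the order vertices enter $R$ rather than in order of cost, so I must argue that this relaxed exploration discipline still keeps $D[x]$ equal to the best ``all-interior-fixed'' path length. The key sub-observation that makes the Dijkstra-style argument go through is that $SP_1$ drains $R$ completely (the inner \textit{while} loop runs until $R=\{\}$) before control returns to the outer \textit{while} loop, so at every heap extraction the set of fixed vertices coincides with the set of explored vertices. Everything else — the upper-bound invariant, the bookkeeping of $Q$, and the arithmetic $cost[v]=cost[z]+w[z,v]$ — is routine.
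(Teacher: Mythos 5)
Your proposal is correct, and it proves the same statement by essentially the same two ingredients the paper uses --- the invariant that $D[x]$ records the cheapest path whose earlier vertices are all fixed, and a case split on the two mechanisms (heap minimum vs.\ predecessor count) by which a vertex becomes fixed --- but it organizes the induction differently. The paper argues by contradiction, picking the counterexample with the smallest $D$ value among vertices whose incoming edges have all been relaxed, and then decomposing a hypothetical cheaper path at its \emph{last} non-fixed vertex $y$, whose fixed successor $z$ is then analyzed by how it was fixed. You instead run a strong induction on the order in which vertices are marked fixed, carrying the companion invariant (the analogue of Lemma~\ref{lem:dijk}(b)) along as part of the induction hypothesis. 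Your organization has two advantages: it makes explicit the mutual dependence between this lemma and the invariant of Lemma~\ref{lem:SP1-outer} (which the paper resolves only implicitly, since its proof of this lemma already appeals to ``$D[x]$ is the minimum cost of all paths that go through fixed vertices''), and it surfaces the operational facts the paper leaves tacit --- that an edge is relaxed only from a vertex already in $R$ and hence already fixed, that $D$ is immutable on fixed vertices, and that $R$ is fully drained before each heap extraction so that fixed coincides with explored there. What the paper's route buys is brevity: the minimal-counterexample choice lets it dispatch the predecessor-fixing case in one line (``by our choice of $x$''), though at the cost of leaving the ordering argument ($D[z]<D[x]$) to the reader. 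Your heap-minimum case is still only a sketch, but you correctly identify the one observation that makes the standard Dijkstra argument survive the out-of-order exploration in $R$, which is the genuinely delicate point.
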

\begin{proof}
%Let $pred[x]$ be the number of incoming edges for the vertex $x$ that have {\em not} been relaxed.
We show that whenever $pred[v]$ is zero, 
$D[v]$ equals $cost[v]$.  We prove this lemma by contradiction.
 If not, let $x$ be the vertex with the smallest $D$ value such that
all its incoming edges have been relaxed but $D[x]$ is greater than $cost[x]$. Let 
 $\alpha$ be a path from $v_0$ to $x$ with the smallest cost (and therefore less than $D[x]$). 
The path $\alpha$ must go through
a non-fixed vertex because $D[x]$ is the minimum cost of all paths that go through fixed vertices. Let $y$ be the last non-fixed vertex along this path.
The successor of $y$ in that path cannot be  $x$ because all predecessors of $x$ are fixed.
Therefore, its successor is a fixed vertex $z$ because $y$ is the last non-fixed vertex along the path.
The path $\alpha$ can be broken into two parts --- the path from the source vertex to $z$ and then the path from $z$ to $x$.
The path from $z$ to $x$ consists only of fixed vertices by the definition of $y$. It is sufficient to show that there exists a path from the source vertex to 
$z$ that consists only of fixed vertex with the same cost as in $\alpha$.
The vertex $z$ can be fixed either because it has the minimum value of $D$ in the heap at some iteration, or
because all the incoming edges to $z$ have been relaxed. In the former case, $D[z] = cost[z]$ and therefore there exists a 
path from the source vertex to $z$ with only fixed vertices and the minimum cost. In the latter case, when $z$ is fixed because all its incoming edges have
been relaxed, then by our choice of $x$, $D[z]$ is equal to $cost[z]$ which again shows
existence of a path with only fixed vertices with the minimum cost.
%
%Hence, the cost of going from the source vertex to $z$ and then from $z$ to $w$ is smaller than the cost of going to $w$ by traversing
%only fixed vertices. But the cost of this path is at least equal to $D[z]$ and then from $z$ to $w$ using only fixed vertices.
%By our  (D1) invariant applied to $z$, there is a path that goes only through fixed vertices from the source to $z$ with cost $D[z]$.
%By combining these two subpaths we have discovered a path of length less than $D[w]$ that goes only through
%fixed vertices violating (D1) for $w$.
\end{proof}

%We now show the correctness of algorithm $SP_1$.
To show the correctness of Algorithm $SP_1$, we make the following claims.
We use the predicate $explored(x)$ that holds true iff the adjacency list of $x$ has been explored.
\begin{lemma}\label{lem:SP1-outer}
The following invariants hold at the outer and the inner while loop of $SP_1$.\\
(a) For all vertices $x$: $fixed[x] \Rightarrow D[x] = cost[x]$.\\
(b) For all vertices $x$:
$D[x] = $ cost of the shortest path from $v_0$ to $x$ such that all vertices in the path before $x$ are fixed.\\
(c) For all vertices $x$:
$x \in H \Ra discovered(x) \wedge (\neg fixed[x] \vee explored(x)) $.\\
Furthermore, $\forall x: discovered(x) \wedge \neg fixed[x] \Ra (x \in H)$.\\
\end{lemma}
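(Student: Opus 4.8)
The plan is to prove Lemma~\ref{lem:SP1-outer} by induction on the combined execution of the outer and inner \emph{while} loops, treating each assignment to $D$, each insertion into $H$, $R$, or $Q$, and each setting of a \emph{fixed} flag as an atomic step. The base case is the state immediately after the two initialization lines ($D[0]:=0$ and $H.\mathrm{insert}((0,D[0]))$): here $v_0$ is discovered and non-fixed, $D[0]=0=cost[0]$, so (a) holds vacuously for all $x\neq v_0$ and trivially for $v_0$; (b) holds because the only discovered vertex is $v_0$ whose only ``path with all earlier vertices fixed'' is the empty path of cost $0$; (c) holds because $H=\{0\}$ and $v_0$ is discovered, non-fixed, satisfying both the forward and the backward implication.

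For the inductive step I would go through the places where the invariant could be broken. Invariant (a): a vertex $x$ becomes fixed either in the outer loop via $H.\mathrm{removeMin}$ on a previously non-fixed vertex, or inside \emph{processEdge1} when $pred[k]$ hits $0$. The first case is the classical Dijkstra argument using (b): the removed vertex has the minimum $D$ among non-fixed vertices, and $D[x]$ equals the cost of a shortest path through fixed vertices, which by positivity of weights cannot be beaten by a path leaving $x$'s ``fixed region'' earlier --- hence $D[x]=cost[x]$. The second case is exactly Lemma~\ref{lem:pred}: when all incoming edges of a non-fixed vertex have been relaxed ($pred[k]=0$), $D[k]=cost[k]$, so marking it fixed preserves (a). I also need to check that once a vertex is fixed, $D$ is never decreased for it again: edges are only relaxed into non-fixed vertices (the \textbf{forall} guards say $\neg fixed(k)$), so a fixed vertex's $D$ is frozen, and (a) is stable.

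Invariant (b) is maintained by the edge-relaxation step itself: whenever an edge $(z,k)$ with $z$ fixed is relaxed, $D[k]$ is updated to $\min(D[k], D[z]+w[z,k])$, and since $D[z]=cost[z]$ by (a), this is precisely the bookkeeping that keeps $D[k]$ equal to the shortest-path cost over paths whose pre-$k$ portion is entirely fixed; crucially, every fixed vertex is eventually put into $R$ and all its outgoing edges to non-fixed vertices are relaxed before $R$ empties, so no ``fixed predecessor'' contribution is missed. For invariant (c), the subtlety unique to $SP_1$ is that $H$ may now contain fixed-but-explored vertices: I would argue that a vertex enters $H$ only via $H.\mathrm{insert}$ of $v_0$ initially or via $H.\mathrm{insertOrAdjust}$ on vertices drained from $Q$, and a vertex enters $Q$ only in \emph{processEdge1} when it is (at that moment) non-fixed with decreased $D$; thus at insertion time it is discovered, and if it has since become fixed it must have been processed through $R$ hence explored --- giving the forward implication $x\in H\Rightarrow discovered(x)\wedge(\neg fixed[x]\vee explored(x))$. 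For the backward implication, a discovered non-fixed vertex $x$ was discovered by some relaxation which either placed it in $Q$ (and $Q$ is fully flushed into $H$ before the outer loop repeats) or made $pred$ hit zero (contradicting non-fixed); if $x$ is still non-fixed at a loop header, it was never removed as fixed, so it is in $H$. The main obstacle I anticipate is this last point: carefully tracking the interplay of $R$, $Q$, and $H$ --- in particular showing that no discovered non-fixed vertex can ``fall through the cracks'' between being removed from $Q$, tested for $\neg fixed[z]$, and inserted into $H$, and that the $Q$-flush happens on every iteration of the outer loop so the backward part of (c) is restored at each loop header even though it may be transiently violated mid-iteration.
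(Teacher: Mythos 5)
Your proposal is correct and follows essentially the same route as the paper's (much terser) proof: induction over the loop iterations, the classical Dijkstra minimality argument for vertices fixed via \texttt{removeMin}, an appeal to Lemma~\ref{lem:pred} for vertices fixed when $pred$ reaches zero, and tracking the $Q$-to-$H$ flush together with the emptiness of $R$ at the outer loop header to establish part (c). You simply supply the bookkeeping details that the paper leaves implicit.
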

\begin{proof}
(a,b) The only difference from Dijkstra's algorithm is that in one iteration of the outer while loop, not only vertices with the minimum value
of $D$ are fixed, but also vertices with $pred[x]$ equal to $0$. Due to Lemma \ref{lem:pred}, the invariant on $fixed$ and $D$ continues to hold.
In the inner loop, whenever a vertex is discovered and is not fixed, it is inserted in the heap maintaining the invariant on $H$.\\
(c) Whenever a vertex $x$ is discovered and it is not fixed, it is inserted in the heap. Whenever a vertex is removed from the heap
it is marked as fixed. A vertex in the heap can also become fixed in the inner while loop. However, whenever a vertex becomes fixed it
is inserted in $R$ for exploration and $R$ is empty at the outer while loop. Hence, any vertex that is fixed is also explored.
\end{proof}

\begin{lemma}\label{lem:SP1-inner}
The following invariant holds at the inner while loop of $SP_1$.
For all vertices $x$:
$x \in R$ iff $fixed[x] \wedge \neg explored(x)$.
\end{lemma}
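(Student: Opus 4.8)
The plan is to prove the biconditional $x \in R \iff (fixed[x] \wedge \neg explored(x))$ as a loop invariant, checking that it is established when the inner while loop is first entered and preserved by every action that either modifies $R$ or changes the truth of $fixed[x]$ or $explored(x)$. I would track exactly three kinds of events: (i) an element $z$ being inserted into $R$ — this happens in the outer loop just after $fixed[j]$ is set to $true$, and inside \textsf{processEdge1} just after $fixed[k]$ is set to $true$; (ii) an element $z$ being removed from $R$ in the \textbf{forall} $z \in R$ loop; and (iii) the exploration of a vertex's adjacency list, which is precisely what makes $explored(z)$ become true and which, in this algorithm, coincides with the iteration of the \textbf{forall} $k:\neg fixed(k)\wedge (z,k)\in E$ loop performed right after $z$ is removed from $R$.

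First I would note the key structural fact, already implicit in the code and used in Lemma~\ref{lem:SP1-outer}(c): a vertex is inserted into $R$ \emph{exactly} at the moment it transitions from non-fixed to fixed, and at that moment it is certainly unexplored (no outgoing edge of a just-fixed vertex has been relaxed yet), so insertions into $R$ preserve the right-to-left direction and are consistent with the left-to-right direction. Conversely, when $z$ is removed from $R$, the algorithm immediately relaxes all outgoing edges $(z,k)$, i.e. it explores $z$; thus after the body of that \textbf{forall} iteration completes, $explored(z)$ holds, and $z \notin R$ — again both directions are maintained. The one subtlety is that the relaxation of $(z,k)$ inside \textsf{processEdge1} may itself fix some new vertex $k$ and insert it into $R$; but this is just another instance of event (i), and since $k$ was non-fixed before and is now fixed-and-unexplored, the invariant is preserved for $k$ as well. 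A vertex already fixed cannot be re-fixed (the guard $\neg fixed(k)$ in the edge loop, and the $\neg fixed[j]$ check in the outer loop, prevent this), so no vertex is inserted into $R$ twice without an intervening removal, and no fixed-unexplored vertex is omitted from $R$.

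For initialization, when the inner loop is entered at the top of the outer loop body, $R = \{j\}$ where $j$ was just marked fixed and is unexplored, and for every other vertex $x$ either $\neg fixed[x]$ holds, or $x$ was fixed in a previous outer iteration and hence (by the termination condition $R = \{\}$ of the previous inner loop, together with the removal/exploration pairing) already explored; so the invariant holds. For the $Q$-processing block and the return to the outer loop, $R$ is empty and no vertex changes its $fixed$ or $explored$ status there, so the invariant is trivially maintained. The main obstacle — really the only place demanding care — is the interleaving inside a single pass of the \textbf{forall} $z \in R$ loop: removals from $R$ and insertions into $R$ (via \textsf{processEdge1}) happen during the same pass, so I would phrase the invariant as holding at the loop \emph{head} and argue that each single iteration (remove one $z$, explore it, possibly insert finitely many new fixed vertices) takes a state satisfying the invariant to another such state. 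Since every fixed vertex is eventually removed from $R$ and explored (the inner loop runs until $R = \{\}$), this also re-establishes the hypothesis needed for the next entry into the inner loop, closing the induction. \ep
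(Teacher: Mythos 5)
Your proposal is correct and follows essentially the same approach as the paper, which proves the lemma in two sentences by pairing insertions into $R$ with the moment a vertex becomes fixed and removals with exploration. Your version is simply a more careful elaboration of that argument (including the observation that the invariant should be read at the loop head, since $z$ is removed from $R$ just before its adjacency list is traversed), which the paper glosses over.
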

\begin{proof}
Whenever a vertex is marked fixed initially, it is inserted in $R$. Whenever it is explored, it is removed from $R$.
\end{proof}

We now have the following Theorem.
\begin{theorem}
Algorithm $SP_1$ returns the weight of a shortest path from source vertices to all other vertices.
\end{theorem}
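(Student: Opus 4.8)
The plan is to derive the theorem as a straightforward corollary of the invariants already established in Lemmas~\ref{lem:SP1-outer} and~\ref{lem:SP1-inner}, together with a termination argument. First I would argue termination: every iteration of the outer while loop performs one $H.removeMin()$, and the inner while loop only inserts a vertex into $R$ (equivalently, into $H$ via $Q$, or directly marks it fixed) at most once per vertex, since a vertex is inserted into $R$ only when it transitions to $fixed$, and the $fixed$ flag is never reset. Hence the total number of heap insertions is at most $n$, and since each outer iteration removes one element, the outer loop runs at most $n$ times; the inner loop terminates because $R$ strictly shrinks once no new vertex is being fixed, and only finitely many fixing events can occur. So the algorithm halts with $H$ empty (in the sense that it contains no non-fixed vertices).

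Next I would invoke the invariants at the point of termination. By Lemma~\ref{lem:SP1-outer}(c), when the loop exits we have $\forall x: discovered(x) \wedge \neg fixed[x] \Ra (x \in H)$; since $H$ has no non-fixed vertices left, every discovered vertex is fixed. By Lemma~\ref{lem:SP1-outer}(a), $fixed[x] \Rightarrow D[x] = cost[x]$, so every discovered vertex has its correct shortest-path cost. It then remains to handle undiscovered vertices: I would note that a vertex $x$ is discovered exactly when $D[x] < \infty$, and by Lemma~\ref{lem:SP1-outer}(b) $D[x]$ is the cost of a shortest path from $v_0$ to $x$ through fixed vertices; if $x$ is never discovered then, by an easy induction on path length (using that the source is discovered first and any edge out of a discovered-then-explored vertex discovers its endpoint), $x$ is genuinely unreachable from $v_0$, so $cost[x] = \infty = D[x]$ is the correct answer.

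The one point requiring care — and the step I expect to be the main obstacle — is establishing that the invariants of Lemma~\ref{lem:SP1-outer} are actually maintained through the \emph{inner} while loop and the deferred processing of $Q$, not just across outer iterations. In particular one must check that when a vertex $k$ is fixed inside \texttt{processEdge1} because $pred[k]$ reached $0$, its value $D[k]$ is already final (this is exactly Lemma~\ref{lem:pred}, which I would cite), and that deferring the $H.insertOrAdjust$ calls for vertices in $Q$ until $R$ empties does not violate invariant (c): a vertex placed in $Q$ is discovered and non-fixed, and either it later gets fixed inside the inner loop (in which case it is placed in $R$ and need not be in $H$, consistent with (c) since it becomes explored before the outer loop resumes) or it survives to the $Q$-flush and is duly inserted into $H$. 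Once these bookkeeping points are verified, the theorem follows immediately, so I would keep the proof short: cite the lemmas for correctness of the $D$-values of fixed/discovered vertices, give the termination count, and dispatch the unreachable case with the reachability induction. \ep
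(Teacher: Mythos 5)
Your proof is correct, and it reaches the conclusion by a somewhat different route than the paper on the one step that actually requires an argument, namely that every reachable vertex ends up with a finite, correct label. Both proofs dispatch the correctness of $D$ on fixed vertices by citing the invariants of Lemmas~\ref{lem:SP1-outer} and~\ref{lem:SP1-inner}. For the liveness part, the paper argues directly that every reachable $x$ is eventually \emph{discovered}, by induction on the number $k$ of vertices whose cost is at most $cost[x]$: each iteration of the outer loop fixes at least one such vertex, so within $k$ iterations some predecessor of $x$ is fixed and explored. You instead prove termination explicitly (at most $n$ heap removals, since a vertex becomes fixed at most once and fixed vertices are never re-inserted), observe that at termination every discovered vertex is fixed and every fixed vertex is explored (the second half of invariant (c) together with Lemma~\ref{lem:SP1-inner} and $R$ empty), and then run an induction along a path from $v_0$ to conclude that an undiscovered vertex must be unreachable. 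Your version buys an explicit termination bound, which the paper leaves implicit, and it avoids the paper's only lightly justified claim that each outer iteration fixes a vertex of cost at most $cost[x]$; the paper's version buys a quantitative statement (discovery within $k$ outer iterations) that is more informative about progress. Your closing caveat about re-verifying the invariants across the inner loop and the deferred $Q$ flush is exactly the content of the proofs of Lemmas~\ref{lem:pred} and~\ref{lem:SP1-outer}, so citing those lemmas suffices and no gap remains.
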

\begin{proof}
Consider any vertex $x$ reachable from the source vertex. We show that $x$ is eventually discovered.
We use  induction on $k$ equal to the number of vertices with cost that  less than or equal to that of $x$.
The base case is trivial. For the inductive case, $x$ has at least one predecessor.
Since all weights are positive, all predecessors of $x$ have cost less than that of $x$.
If all vertices are sorted based on their cost, the outer while loop marks as fixed at least one vertex with cost less than or equal to $x$.
Hence, in at most $k$ iterations of the outer while loop, one of the predecessors of $x$ is marked as fixed. The algorithm terminates only when 
every fixed vertex is explored, and therefore $x$ is discovered.

Any vertex $x$ that is discovered is either in $H$ when it is not fixed or fixed but explored, or in $R$ when it is fixed and not explored.
If the vertex $x$ is $fixed$, from the invariant on $D$ and $fixed$, we have that $D[x]$ equal to $cost[x]$.
If the vertex $x$ is not $fixed$, it is eventually removed from the heap $H$ and becomes fixed.
Hence, any reachable vertex $x$ has its $D[x]$ set to $cost[x]$.

If any vertex $x$ is not reachable, then it can never be discovered and $D[x]$ returns $\infty$ due to initialization.
\end{proof}

%We next analyze the time complexity of the algorithm $SP_1$. 
%Associated with the heap, we maintain a boolean array $inHeap$ that records whether any element is in the heap or not and a count that
%keeps the total number of elements in the heap.
%This allows $H.exists(k)$  and $H.isEmpty()$ to be $O(1)$ operations. Furthermore, $H.remove()$ can be implemented as lazy deletion where
%the corresponding $inHeap$ bit is marked as false provided that $H.removeMin()$ operation is suitably modified to account
%for lazy deletion. After these implementation tricks, 
We now show that $SP_1$ cuts down the complexity of Dijkstra's algorithm significantly for acyclic graphs whenever source vertex
is the only vertex with no incoming edges.
To ensure this, whenever we read the graph 
we create a list $L$ of all vertices other than the source vertex that have no incoming edges. All these vertices are clearly not reachable from the source vertex.
We then repeatedly remove vertices from the list $L$ and their outgoing edges from the graph. If in this process, another vertex has all its
incoming edges removed, it is added to the list $L$. The procedure is continued until $L$ becomes empty and we are guaranteed
that the source vertex is the only vertex with no incoming edges. This procedure takes at most $O(e)$ time because
any edge is processed at most once. 

We now have the following result.
\begin{theorem}
$SP_1$ takes $O(e + n \log n)$ time with Fibonacci heaps for any directed graph and takes $O(e)$ time for directed acyclic graphs 
in which source node is the only one with with zero incoming edges.
\end{theorem}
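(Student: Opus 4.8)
The plan is to bound separately the cost of heap operations and the cost of everything else. For the non-heap work, I would argue that each directed edge $(z,k)$ is passed to \texttt{processEdge1} at most once: an edge is relaxed only when its source $z$ is removed from $R$ and explored, and by Lemma~\ref{lem:SP1-inner} a vertex enters $R$ exactly once (it is inserted when it becomes \emph{fixed} and removed when \emph{explored}, and $fixed$ is never reset). Each call to \texttt{processEdge1} does $O(1)$ work plus possibly one insertion into $Q$; since $Q$-insertions are guarded by $k \notin Q$, the total number of elements ever placed in $Q$ is at most $e$, and draining $Q$ into the heap is charged to the corresponding \texttt{insertOrAdjust} calls. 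The initialization of $pred$ costs $O(e)$, and the preprocessing that strips vertices with no incoming edges costs $O(e)$ as already argued in the text. So the non-heap bookkeeping is $O(n+e)$.

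For the heap cost, I would observe that \texttt{insertOrAdjust} is called only on vertices drawn from $Q$, i.e.\ only on vertices whose $D$ value has strictly decreased since they were last in $Q$; hence the number of heap modifications is at most the number of successful edge relaxations, which is at most $e$. Each vertex is inserted into $H$ at most once and removed at most once. With Fibonacci heaps, insert and decrease-key are $O(1)$ amortized and \texttt{removeMin} is $O(\log n)$ amortized, giving $O(e + n\log n)$ overall for general directed graphs. (I should double-check the corner case where a vertex already marked fixed is still sitting in $H$: by Lemma~\ref{lem:SP1-outer}(c) such a vertex has been explored, the outer loop simply skips it on \texttt{removeMin}, and the count of non-fixed heap entries lets \texttt{empty()} return correctly — so these spurious entries add only $O(1)$ each and do not change the bound.)

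For the acyclic case, the key claim is that after the preprocessing step every vertex other than $v_0$ has at least one incoming edge, and in a DAG every vertex is reachable from $v_0$ along paths whose intermediate vertices get fixed before it. I would prove by induction on a topological order that no vertex other than $v_0$ is ever inserted into $H$: whenever a vertex $z$ is explored, each successor $k$ has $pred[k]$ decremented, and since the graph is acyclic and $z$ precedes $k$ topologically, by the time the last predecessor of $k$ is explored we have $pred[k]=0$, so $k$ is marked fixed and pushed onto $R$ directly rather than into $Q$. Thus $Q$ stays empty throughout, \texttt{insertOrAdjust} is never called, and the only heap operations are the single insertion and single removal of $v_0$. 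All remaining work is the $O(n+e)$ bookkeeping above, giving $O(e)$ (using $n = O(e)$ since every non-source vertex has an incoming edge).

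The main obstacle I anticipate is making the induction in the acyclic case fully rigorous, specifically showing that a vertex is never prematurely inserted into $Q$: I need that \emph{all} predecessors of $k$ are explored before the edge-relaxation that would otherwise route $k$ into $Q$, and this requires knowing that the inner \texttt{while} loop on $R$ drains completely — processing vertices in an order consistent with topological order even though the algorithm does not sort $R$ — before control returns to the outer loop. The argument is that a vertex enters $R$ only after all its predecessors are fixed (either via \texttt{removeMin} or via $pred = 0$), so within one run of the inner loop the set of explored vertices only grows and respects the topological order; formalizing this "$R$ processes a topologically-closed downward set" invariant is the delicate part, but it follows from Lemma~\ref{lem:pred} together with the observation that in a DAG a vertex's cost exceeds all its predecessors' costs.
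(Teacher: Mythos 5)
Your overall approach matches the paper's: for general graphs you charge the extra bookkeeping ($pred$ decrements, $Q$ and $R$ manipulation) at $O(1)$ per edge and fall back on the standard Fibonacci-heap accounting ($n$ inserts and removeMins, at most $e$ decrease-keys), which is a more explicit version of the paper's one-line comparison to Dijkstra's algorithm; for DAGs you argue, as the paper does, that everything gets fixed within the first iteration of the outer loop so that essentially no heap operations occur.

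One intermediate claim in your DAG argument is false, though your conclusion survives. You assert that a vertex $k$ is ``pushed onto $R$ directly rather than into $Q$,'' so that ``$Q$ stays empty throughout.'' That is not what \texttt{processEdge1} does: when the \emph{first} incoming edge of a vertex $k$ with in-degree greater than one is relaxed, $pred[k]$ is still positive, $D[k]$ has just dropped from $\infty$, and $k$ is therefore inserted into $Q$ at that moment. So $Q$ is generally nonempty even for a DAG. What saves the bound is that $Q$ is drained only after the inner loop on $R$ has terminated, by which point every reachable vertex is already fixed, and the guard ``if $\neg fixed[z]$'' in the $Q$-draining loop suppresses every call to \texttt{insertOrAdjust}. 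With that correction the only heap operations are the insertion and removal of $v_0$, and the $O(e)$ bound follows. Your worry about needing the inner loop to respect topological order is also a red herring: the order in which $R$ is processed is irrelevant; all you need is that the inner loop runs until $R$ is empty and that (by induction on topological order) every vertex eventually has all its predecessors explored and hence enters $R$ during that single run.
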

\begin{proof}
For a general directed graph, any steps taken in $SP_1$ is also
taken in Dijkstra's algorithm except for the constant time operations such as decrementing $pred$, and inserting or deleting a vertex from $Q$ and $R$.
Both $Q$ and $R$ can be implemented as a linked lists with $O(1)$ insertions at the tail and $O(1)$ deletions at the head of the list.
The membership in $Q$ can also be implemented in $O(1)$ time using a bit vector.
Hence, using Fibonacci heaps, we get the time complexity of Dijkstra's algorithm.

For directed acyclic graphs, initially the source vertex is removed from the heap and inserted in $R$. Now as we explore $R$, the predecessor count for
all vertices adjacent to the source vertex will decrease by $1$. Since the graph is acyclic, at least one new vertex will become fixed.
As we continue processing $R$, all the nodes of the graph will become fixed (just as in the topological sort of an acyclic graph).
Thus, all
reachable vertices of an acyclic graph will be processed in the first iteration of the outer while loop. In this iteration, every edges is processed exactly
once, giving us $O(e)$ time complexity.
\end{proof}
%All vertices that have in-degree $1$ do not require any heap operation in $SP_1$ unlike Dijkstra's algorithm.
%Furthermore, all vertices with higher in-degree also perform one less heap operation.
%In Dijkstra's algorithm, any such vertex $k$ gets its label adjusted in the heap even when $pred[k]$ is zero.
%In $SP_1$, the label is not adjusted because the vertex is marked as fixed. 
%The procedure {\em processEdge1} either takes $O(\log n)$ time if the edge results in decrease of $D[k]$ and $pred[k]$ is greater than one;
%otherwise, it is a constant time operation.
%To visualize the number of heap-operations saved on a single source directed graph, consider laying out the directed graph in a straight line
%from the source vertex where vertices are ordered in the order of increasing cost. If two vertices have the same cost, then they can be ordered arbitrarily. 
%Once all vertices are ``sorted'' in this fashion, every edge can be classified as a
%forward edge or a backward edge. It is easy to see that the number of vertices that are inserted in $R$ due to the predecessor condition
%removed from heap with removeMin operation
%is at least the number of vertices with no incoming backward edge. All these vertices contribute to reduction in heap operations.
%In the extreme case, when the graph is acyclic, there are no incoming backward edges and the algorithm performs no heap operations besides
%the initial insertion of the source vertex. In this case, the algorithm reduces to a simple
%BFS traversal.
The worst case for $SP_1$ is when the vertex discovered last has outgoing edges to all other vertices. In such a worst-case scenario,
$SP_1$ will not have any vertex becomes fixed through processing of $R$ and the algorithm will degenerate into Dijkstra's algorithm.

\section{Algorithm $SP_2$: Using Weights of Incoming edges}

We now strengthen our mechanism to mark vertices as fixed. $SP_2$ requires access to incoming edges for any vertex.
%If $(v, k)$ is an edge, then we call $v$ a {\em predecessor} of $k$. Note that predecessor is not an acyclic relation and $k$ may also
%be a predecessor of $v$.
Let a vertex $k$ be discovered from a predecessor vertex $z$.
Then, we compute $inWeight[k]$ as the minimum weight of incoming edges from all predecessors  other than $z$.
We exploit $inWeight$ as follows.
\begin{lemma}\label{lem:inWeight}
Let $k$ be any non-fixed vertex discovered from the vertex $z$ in any iteration of the outer while loop with $d$.
If ($D[k] \leq d + inWeight[k]$) then $D[k]$ equals $cost[k]$.
\end{lemma}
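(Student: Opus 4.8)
The plan is to argue by contradiction, keeping the skeleton of the proof of Lemma~\ref{lem:pred} but replacing the ``all incoming edges relaxed'' hypothesis by the weight-based test. Assume $k$ is non-fixed, was first discovered from $z$ during the outer-loop iteration that popped the pair $(j,d)$ from the heap, the test $D[k] \le d + inWeight[k]$ holds, yet $D[k] > cost[k]$. Since $D[k]$ always equals the cost of an actual path from $v_0$ to $k$ (the feasibility invariant inherited from Dijkstra's algorithm), we have $D[k] \ge cost[k]$, so it suffices to derive the reverse inequality $D[k] \le cost[k]$ and reach a contradiction.

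The technical heart is a monotonicity statement that I would isolate and prove first: \emph{at the moment the outer loop pops $(j,d)$, every vertex that has not yet been explored has cost at least $d$.} I would prove this by taking a minimal counterexample $p$ with respect to $cost[p]$. Its predecessor $q$ on a shortest path to $p$ has $cost[q] < cost[p]$ by strict positivity of weights, so by minimality the statement holds for $q$; if $q$ were still unexplored we would get $cost[p] = cost[q] + w[q,p] > d$, a contradiction, so $q$ is already explored. But then the edge $(q,p)$ was relaxed in an earlier iteration (or $p$ was already fixed, hence explored at the outer loop by the invariant in Lemma~\ref{lem:SP1-outer}, contradicting that $p$ is unexplored), which forces $D[p] \le cost[q] + w[q,p] = cost[p] < d$. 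Since $p$ is then discovered and non-fixed, $p$ lies in $H$ with a key strictly below $d$, contradicting that $d$ is the heap minimum.

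With the monotonicity claim in hand, I would finish as follows. Because $k$ is \emph{first} discovered from $z$ during the current outer iteration, no predecessor of $k$ — and $z$ itself — has been explored before $z$ is explored in this iteration; in particular none of them was explored when $(j,d)$ was popped, so the monotonicity claim gives $cost[z] \ge d$ and $cost[p] \ge d$ for every predecessor $p$ of $k$. The relaxation that discovered $k$ gives $D[k] \le cost[z] + w[z,k]$, and the hypothesis gives $D[k] \le d + inWeight[k]$. Now fix any predecessor $p$ of $k$: if $p = z$ then $cost[p] + w[p,k] = cost[z] + w[z,k] \ge D[k]$; if $p \ne z$ then $w[p,k] \ge inWeight[k]$, hence $cost[p] + w[p,k] \ge d + inWeight[k] \ge D[k]$. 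Taking the minimum over all predecessors $p$ yields $cost[k] \ge D[k]$, which is the desired inequality.

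The main obstacle is the monotonicity sub-claim: this is where the distinction between \emph{fixed} and \emph{explored} vertices must be used carefully, where the cascade performed by the inner loop has to be accounted for, and where strict positivity of the weights is genuinely needed. Once that is established, the remainder is the same predecessor-by-predecessor bookkeeping used in Lemma~\ref{lem:pred}, now carrying the extra term $inWeight[k]$ for the predecessors other than $z$.
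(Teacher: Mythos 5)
Your proof is correct, and it follows the same quantitative idea as the paper's: every predecessor $p\neq z$ of $k$ contributes at least $d+inWeight[k]\geq D[k]$ to any path ending at $k$, while $z$ itself contributes $cost[z]+w[z,k]\geq D[k]$ by the relaxation that discovered $k$. The difference is in how the bound ``$\geq d$'' on the predecessors is justified. The paper argues in two sentences that any \emph{non-fixed} predecessor $v$ currently satisfies $D[v]\geq d$ (since $d$ is the heap minimum), and concludes that no relaxation of $(v,k)$ can lower $D[k]$; as written this leaves implicit why the bound survives until the moment $(v,k)$ is actually relaxed --- one really needs $cost[v]\geq d$, not just the instantaneous $D[v]\geq d$. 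You isolate exactly that statement as a separate monotonicity claim (every vertex unexplored at the moment $(j,d)$ is popped has $cost\geq d$) and prove it by a minimal-counterexample argument in the style of Lemma~\ref{lem:pred}, then finish by taking the minimum of $cost[p]+w[p,k]$ over all predecessors rather than reasoning about future relaxations. Your route is therefore the same strategy carried out more completely: it costs an extra sub-lemma, but it closes the timing gap in the paper's sketch and makes explicit where strict positivity of the weights and the fixed-versus-explored distinction are used.
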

\begin{proof}
Since $d$ is the weight of the vertex removed from the heap $H$, we know that any predecessor vertex $v$ that is not fixed is guaranteed to
have $D[v] \geq d$. Hence, $D[k]$ is guaranteed to be less than or equal to $D[v] + w[v,k]$ for any incoming edge $(v,k)$ that is relaxed.
\end{proof}

This mechanism comes at the space overhead of maintaining an additional array $inWeight[]$ indexed by vertices.

{\small
\begin{figure}[htb]\begin{center}
\fbox{\begin{minipage}  {\textwidth}\sf
\begin{tabbing}
\=xx\=xxx\=xxxx\=xxxx\=xxxx\=xxxx\=xxxx\=\kill
\>\> $inWeight$: array [$0 \ldots n-1$] of int\\
\>\>\>  initially $\forall i: inWeight[i] = \infty$;\\

\>\> {\bf procedure} processEdge2($z,k$);\\
\>\> {\bf var} $changed$: boolean initially false;\\
 \> \> $pred[k] := pred[k] - 1$;\\
 \\
 \> \> // Step 1: vertex $k$ has been discovered.\\
\>\>  //  Compute $inWeight$\\
 \> \> if $(D[k] = \infty) \wedge (pred[k] >0)$ then\\
 \> \> \> $inWeight[k] := \min \{ w[v,k] ~|~ (v,k) \in E, v \neq z \}$;\\
 \\
 \> \> // Step 2: relax $(z, k)$ edge\\
 \> \> if ($D[k] > D[z]+w[z,k]$) then \\
 \> \> \>$D[k] := D[z] + w[z,k]$;\\
 \> \> \>$changed$ := true;\\
 \\
  \> \> // Step 3: check if vertex $k$ can be fixed.\\
 \> \> if $((pred[k] = 0) \vee (D[k] \leq d + inWeight[k])$ then\\
 \> \> \> $fixed[k] := true$;\\
 \> \> \>  $R := R$.insert$(k)$; \\
  \>\>  else if   ($changed \wedge (k \not \in Q)$) then $Q$.insert$(k)$;
% \> \>  {else} if ($changed$) then $H$.insertOrAdjust $(k, D[k])$;
\end{tabbing}
\end{minipage}
} % end \fbox
\end{center}
\caption{Algorithm $SP_2$: Algorithm $SP_1$ with {\em processEdge2}  \label{fig:alg-sp2}}
\end{figure}
}

After incorporating Lemma \ref{lem:inWeight}, we get the algorithm $SP_2$ shown in Fig. \ref{fig:alg-sp2}. It is same as $SP_1$ except
we use the procedure $processEdge2$ instead of $processEdge1$.
In step 1, we compute $inWeight[k]$ when it is discovered for the first time, i.e., when $D[k]$ is $\infty$.
%Hence, the computation of $inWeight[k]$ is done exactly once for any discovered vertex.
If there are additional incoming edges, i.e., $(pred[k] > 0)$, we determine the minimum of all the incoming weights except from the vertex $z$ that 
discovered $k$.
In step 2, we perform the standard edge-relaxation.
In step 3, we check if the vertex $k$ can be fixed either because it has no more predecessors, or 
for any non-fixed predecessor $v$,  the relaxation of the edge $(v,k)$ will not change $D[k]$.
Observe that for sequential implementations, if $R$ is maintained as a queue and all edge weights are uniform, then any vertex
discovered for the first time will always be marked as fixed and will never be inserted in the heap.
For such inputs, $SP_2$ will behave as a simple breadth-first-search.

%The asymptotic time complexity of $SP_2$ for general graph is  that same as that of Dijkstra's algorithm.
Since any vertex is discovered at most once, computing $inWeight$ requires processing of all incoming
edges of a vertex at most once. Hence, the cumulative time overhead is linear in the number of edges.
If the graph is unweighted, then $SP_2$ is much faster than Dijkstra's algorithm
when $R$ is implemented as a queue.

\begin{theorem}
Suppose that $R$ is implemented as a simple queue.
$SP_2$ takes
\begin{itemize}
\item  $O(e + n \log n)$ time with Fibonacci heaps for any directed graph, 
\item $O(e)$ time for directed acyclic graphs 
in which only the source node has zero incoming edges,
\item $O(e)$ time for any unweighted directed graph.
\end{itemize}
\end{theorem}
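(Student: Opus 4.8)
The plan is to reduce all three bounds to one common fact and then bound the heap work separately in each case. The common fact is that in $SP_2$ a vertex is put into $R$ exactly when $fixed[\cdot]$ is set for it and is taken out of $R$ exactly when it is explored, so every vertex is explored exactly once and every edge is relaxed exactly once; hence the work $SP_2$ does beyond Dijkstra's algorithm --- decrementing $pred$, the single scan of each vertex's incoming edges that computes $inWeight$, and the $O(1)$ membership/insert/remove operations on $Q$ and $R$ --- totals $O(e)$. For the first bullet it then remains to note that each vertex is inserted into $H$ at most once (a vertex can enter $H$ only through $Q$, and only on an iteration on which its $D$ value strictly decreased while it was still non-fixed; afterwards insertOrAdjust acts as a decrease-key, and once fixed the vertex is never returned to $Q$), that $D[k]$ changes at most $\deg^-(k)$ times so there are at most $e$ decrease-key operations, and that each outer iteration performs one removeMin whose target is either a non-fixed vertex that then becomes fixed (at most $n$ of these in all) or a previously-inserted fixed vertex (again at most $n$ in all). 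With Fibonacci heaps this is $O(e)$ for inserts and decrease-keys plus $O(n\log n)$ for the $O(n)$ removeMin operations, i.e.\ $O(e+n\log n)$; alternatively one invokes Lemma~\ref{lem:SP1-outer} to see that every heap operation of $SP_2$ is already performed by Dijkstra's algorithm on the same input.

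For the second bullet I would reuse the $SP_1$ argument, which applies verbatim because $SP_2$ only ever marks additional vertices fixed. In a finite DAG, tracing predecessors backward from any vertex must terminate at an in-degree-$0$ vertex, which by hypothesis is $v_0$, so every vertex is reachable from $v_0$. Once the first removeMin places $v_0$ in $R$, the inner loop cannot empty $R$ until every vertex has been explored: a vertex becomes fixed the moment its $pred$ counter hits $0$, i.e.\ the moment its last predecessor is explored, and acyclicity guarantees that at every stage some not-yet-explored vertex has all its predecessors explored (a topological sweep). Hence no vertex other than $v_0$ ever enters $H$ --- when $Q$ is flushed, each $z\in Q$ is already fixed and the guard skips it --- so the only heap activity is the initial insert and one removeMin, and $SP_2$ runs in $O(e)$.

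For the third bullet, with $R$ a FIFO queue and all weights equal to $1$, every $D$ value ever stored is an integer in $\{0,\dots,n-1\}$, and a vertex is inserted into $H$ only while it is non-fixed, at which point (by the reasoning behind Lemma~\ref{lem:inWeight}) its cost, hence its current $D$ value, is at least the value $d$ of the last vertex extracted from $H$; so insertions never fall below the running minimum. I would therefore realize $H$ as an array of buckets indexed by distance (equivalently, since the graph is unweighted, as a plain FIFO), on which insert, decrease-key and removeMin are all $O(1)$ amortized, the monotone sweep of the bucket index costing $O(n)$ overall. Combined with the $O(n)$ bound on the number of inserts and removeMin operations from the first paragraph and the $O(e)$ bound on all remaining work, this gives $O(e+n)=O(e)$. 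As a remark, in the typical case $H$ stays empty after $v_0$: one can show by induction on the distance $\ell$ that the FIFO discipline makes the inner loop explore vertices in non-decreasing order of cost and that every distance-$\ell$ vertex is already fixed --- via $pred$ reaching $0$, or via $D[k]=\ell\le d+inWeight[k]$ --- before $Q$ is ever flushed, so that $SP_2$ literally performs a breadth-first search.

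I expect the main obstacle to be that last induction rather than the $O(e)$ bound itself. The awkward configuration is a set of vertices at a common distance that are mutual predecessors, for example a $2$-cycle $u\leftrightarrow u'$ at distance $\ell\ge 2$: neither $pred$ counter can reach $0$ until the other vertex is explored, and while $d$ is still $0$ the test $D\le d+inWeight$ fails, so such vertices can genuinely be forced into $H$. This is exactly why I would route the unweighted complexity bound through the bucket-queue estimate, which tolerates $O(n)$ heap operations, rather than through the stronger ``no heap operations'' claim, and would attempt to establish the latter only under the extra hypothesis that no two equidistant vertices are mutually reachable.
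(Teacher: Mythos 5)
Your treatment of the first two bullets coincides with the paper's: the extra bookkeeping ($pred$, the one-time scan for $inWeight$, the $Q$/$R$ operations) is charged at $O(e)$, the heap operations are bounded exactly as in Dijkstra's algorithm, and the DAG case is inherited verbatim from $SP_1$. Where you genuinely diverge is the unweighted case, and your route is the more defensible one. The paper argues that once $v_0$ is explored every newly discovered vertex $k$ satisfies $D[k]\le d+inWeight[k]$, is therefore fixed on discovery, and so the run ``reduces to breadth-first search'' with no heap traffic beyond the initial insertion; the $O(e)$ bound then holds for any heap. Your worry about this step is well founded: the test is evaluated against the $d$ of the most recent \texttt{removeMin}, and a vertex at BFS level $\ell$ discovered from a level-$(\ell-1)$ vertex that was itself fixed inside the inner loop sees $D[k]=\ell$ compared against $d+1<\ell$; if such a vertex also has a predecessor at level $\ge\ell$ (your mutual $2$-cycle, or even a single back edge from a descendant), its $pred$ counter cannot reach zero either, and it really is pushed into $H$. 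Chaining such gadgets level by level yields $\Theta(n)$ insertions and extractions, so the ``no heap operations'' claim fails and, with a comparison-based heap, so would the $O(e)$ bound itself on sparse instances. Your fix --- observing that all keys are integers in $\{0,\dots,n-1\}$ and that insertions never fall below the running minimum, so $H$ can be realized as a monotone bucket queue with $O(1)$ amortized operations and an $O(n)$ pointer sweep --- tolerates the $O(n)$ heap operations and still delivers $O(e+n)=O(e)$. In short: the paper buys the bound by a structural claim about the execution that does not survive your counterexample, while you buy it by a data-structure substitution that does; your added caution in withholding the ``literal BFS'' claim except under an extra hypothesis on equidistant mutually reachable vertices is exactly right.
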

\begin{proof}
Since $SP_2$ retains all properties of $SP_1$, we only need to prove the claim on unweighted directed graphs.
In unweighted directed graphs, once the source vertex is explored any vertex $k$  adjacent to the source vertex
become fixed because it satisfies the condition that $D[k] \leq d + inWeight[k]$ and is inserted in $R$.
Continuing in this manner, the algorithm reduces to breadth-first search by simply inserting nodes in $R$ in breadth-first manner
and removing from $R$ till all reachable vertices
are explored. 
\end{proof}

Hence, $SP_2$ unifies Dijkstra's algorithm with the topological sort for acyclic graphs as well as the breadth-first search for unweighted graphs.
Consequently, it is faster than Dijkstra's algorithm when the input graph is close to an acyclic graph (i.e., has few cycles)
or close to an unweighted graph (most weights are the same).

Lemma \ref{lem:inWeight} is similar to the in-version method of \cite{crauser1998parallelization}.
The in-version fixes any vertex $k$ such that $D[k] \leq d + \min \{ w[j,k] ~|~ \neg fixed(j), (j, k) \in E \}$.
There are two differences. First, we do not include the weight of the edge that discovered $k$ in our calculation 
of $inWeight$. Second, in \cite{crauser1998parallelization} the implementation is based on maintaining an additional priority queue which adds the overhead of
$O(e \log n)$ to the algorithm with ordinary heap implementation. $SP_2$ adds a cumulative overhead of $O(e)$.
In sequential implementations, the in-version increases the number of heap operations, whereas $SP_2$ decreases this number.
%We show later that $SP_2$ is itself a special case of $SP_3$.

Consider the graph in Fig. \ref{fig:myGraph3}(a). Initially $(0,0)$ is in the heap $H$. It is removed and inserted in $R$ marking
$v_0$ as fixed. Now outgoing edges of $v_0$ are relaxed. Since $pred[1]$ becomes $0$, $v_1$ is marked as fixed and added to $R$.
The vertex $v_2$ has $pred[2]$ as $1$ after the relaxation. It is inserted in the heap with $D$ value as $2$ and $inWeight[2]$ is computed as $1$.
Since $R$ is not empty, 
outgoing edges of $v_1$ are relaxed.  The vertex $v_3$ is inserted in $Q$ with $D$ value $12$ and the vertex $v_4$ is inserted
with $D$ value as $11$. We also compute $inWeight[3]$ as $\min\{6, 8\}$ equal to $6$ and $inWeight[4]$ as $5$.
At this point $R$ is empty and the minimum vertex $v_2$ is removed from the heap and marked as fixed.
When $v_2$ is explored and the edge $(v_2, v3)$ is relaxed, the label of $v_3$ is adjusted to $8$. Since $8$ is less than or equal to $d+inWeight[3]=2+6$, it is marked as fixed and inserted in $R$.
When edge $(v_2, v_4)$ is relaxed, $D[4]$ is reduced to $7$.
Moreover, $pred[4]$ becomes zero and $v_4$ is inserted in $R$ for exploration. 
At this point, all vertices are fixed. When $R$ is processed, there are no additional changes and the algorithm terminates
with the $D$ array as $[0,9,2,8,7]$.

\section{Algorithm $SP_3$: Using Lower Bounds with Upper Bounds}
We now generalize the mechanism of $SP_2$ further to determine fixed vertices based on the idea of using lower bounds.
%In Dijkstra's algorithm, we had only one way to 
%deduce that a vertex is fixed: whenever a vertex is removed from the heap with the minimum distance $D[x]$.
%Consequently, a single vertex is explored in any iteration resulting in the sequential bottleneck.
%For increasing parallelism in the algorithm, it is important that as many vertices are marked as fixed as possible so that
%multiple vertices can be explored in parallel. 
%Algorithm $SP_1$ fixes a vertex $k$ if all its predecessors have been explored.
%Algorithm $SP_2$ fixes a vertex $k$ if for any predecessor $v$ that is not fixed, the cost of reaching vertex $x$ via $v$ would be higher than the known cost. 
%Both $SP_1$ and $SP_2$ are still based on lowering $D[x]$.
%
%, the value 
%of known cost to any vertex $x$ as the algorithm executes. 
The idea of starting with the infinite cost as an estimate of the actual cost
and decreasing the estimate via edge-relaxation has been the underlying principle for not only Dijkstra's algorithm but almost all other shortest path algorithms
such as Bellman-Ford, Floyd-Warshall \cite{floyd1962algorithm} and their derivatives. In this section, we present the idea of using
lower bounds  $C$ associated with every vertex in addition to the upper bounds given by $D$.

We keep a global array $C$ such that $C[x]$ is the lower bound associated with
each vertex $x$. 
%The invariant we maintain is 
%a dual for $D[x]$.
% For every vertex $x$, $D[x]$ guarantees that there
%is a path of cost at most $D[x]$ from the source vertex to $x$. 
We maintain the invariant that there is no path of cost strictly lower than $C[x]$
from the source vertex to $x$. Just as $D[i]$ is initialized to $\infty$, $C[i]$ is initialized to $0$ for all $i$ so that the invariant is true
initially. Clearly, any vertex $x$ such that $C[x]$ and $D[x]$ are equal has both of them equal to $cost[x]$.
Hence, any vertex with $C[x]$ equal to $D[x]$ can be marked as fixed.
Conversely, if any vertex $x$ is known to be fixed (for example, by removal from the min-heap), then $C[x]$ can be set to $D[x]$.

How do we determine nontrivial $C[x]$ for non-fixed vertices? Just as the exploration of a vertex $x$ in Dijkstra's algorithm updates
$D[y]$ for all out-going edges $(x,y)$, we define a dual step that can update $C[x]$ based upon all in-coming edges.
The value of $C[x]$ for the source vertices is always zero. For other vertices, we have
%First, it is clear that 
\begin{lemma}\label{lem:crule}
Let $C[x]$ be a lower bound on the cost of the shortest path to $x$. Then, for any vertex $x$ that is not a source vertex,
\begin{equation}\label{eqn:crule}
C[x] \geq \min \{ C[v] + w[v,x] ~|~ (v,x) \in E\} 
\end{equation}
\end{lemma}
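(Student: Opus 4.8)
The plan is to read the lemma the way the surrounding text intends it: as the statement that $\mu(x):=\min\{C[v]+w[v,x]\mid (v,x)\in E\}$ is itself a valid lower bound on $cost[x]$, so that performing the dual relaxation step $C[x]:=\max(C[x],\mu(x))$ preserves the stated invariant. Concretely, I would carry along as a global invariant the property that $C[u]\le cost[u]$ for \emph{every} vertex $u$ (equivalently: no path from $v_0$ to $u$ has cost strictly below $C[u]$), which holds at initialization since each $C[u]$ starts at $0$, all weights are positive, and $cost[u]\ge 0$. Under this hypothesis the content of the lemma is exactly the inequality $cost[x]\ge\mu(x)$ for every non-source $x$, and \eqref{eqn:crule} then follows: if $C[x]$ is (or gets) raised to $\mu(x)$ it is still $\le cost[x]$, and if it is not raised the inequality in the lemma is vacuous.

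First I would dispose of the degenerate cases. Since the paper assumes every vertex other than $v_0$ has at least one incoming edge, the set over which $\mu(x)$ is taken is nonempty, so $\mu(x)$ is well defined. If $x$ is unreachable from $v_0$ then $cost[x]=\infty\ge\mu(x)$ and we are done, so assume $x$ is reachable.

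For the main case, let $\alpha$ be a shortest path from $v_0$ to $x$. Because $x\ne v_0$, $\alpha$ contains at least one edge; let $(v,x)$ be its final edge. Deleting that edge leaves a path $\alpha'$ from $v_0$ to $v$, whose cost is at least $cost[v]$, so $cost[x]=w(\alpha)=w(\alpha')+w[v,x]\ge cost[v]+w[v,x]$. Applying the invariant $C[v]\le cost[v]$ gives $cost[x]\ge C[v]+w[v,x]\ge\mu(x)$, as required.

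The argument is short; the only point that needs care is the \emph{scope} of the hypothesis $C[v]\le cost[v]$: it must be known for the particular predecessor $v$ produced above, including when $v$ is a non-fixed vertex whose $C$-value is still being improved. I would therefore state the invariant as a property of the entire array $C$ that is maintained across all the updates $SP_3$ performs on it (the $C[x]:=D[x]$ updates for vertices discovered to be fixed, and the in-edge relaxation of this lemma), and check preservation for each kind of update; this lemma is precisely that check for the in-edge relaxation. No use of strict positivity of weights is needed beyond what already makes the initial choice $C\equiv 0$ a correct lower bound.
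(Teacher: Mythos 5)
Your proof is correct and follows essentially the same route as the paper's: both arguments take the last edge $(v,x)$ of a shortest path to $x$ and combine $cost[x] \geq cost[v] + w[v,x]$ with the lower-bound property $C[v] \leq cost[v]$ to conclude $cost[x] \geq \min \{ C[v] + w[v,x] \mid (v,x) \in E\}$. Your additional care about how to read the lemma (as asserting that the right-hand side is itself a valid lower bound, so the update preserves the invariant for every vertex including non-fixed ones) and your handling of the unreachable case are welcome clarifications of what the paper leaves implicit, but the core argument is the same.
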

\begin{proof}
Since $x$ is not the source vertex, it must have a predecessor $v$ in a shortest path from the source vertex to $x$.
The equation follows by noting that an additional cost of $w[v,x]$ would be incurred as the last edge on that path.
\end{proof}
%We call this rule {\em vertex-relaxation}.
The lemma gives an alternate short proof of Lemma \ref{lem:pred}. 
Consider any $x$ such that all its predecessors are fixed.
Since $C[v]$ is equal to $D[v]$ for all fixed vertices, from Eqn \ref{eqn:crule},
we get that $C[x] \geq \min \{ D[v] + w[v,x] ~|~ (v,x) \in E \}. $ We also get that
$D[x] \leq \min \{ D[v] + w[v,x] ~|~ (v,x) \in E \}$ using the edge-relaxation rule.
Combining these two inequalities with $C[x] \leq D[x]$, we get that $C[x]$ is equal to $D[x]$ and therefore
$x$ can be marked as fixed.

An additional lower bound on the cost of a vertex is determined using the global information on the graph.
At any point in execution of the graph, there are two sets of vertices --- fixed and non-fixed. Any path from the source vertex
to a non-fixed vertex must include at least one edge from the set of edges that go from a fixed vertex to a non-fixed vertex.
%$D$ value of the vertex removed from the heap.
\begin{lemma}\label{lem:crule2}
For any $x$ such that $\neg fixed[x]$,\\
$C[x] \geq \min \{ C[u] + w[u,v] ~|~ (u,v) \in E \wedge fixed[u] \wedge \neg fixed[v] \}$.
\end{lemma}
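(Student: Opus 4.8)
The plan is to establish, exactly as for Lemma~\ref{lem:crule}, that the right-hand side of the displayed inequality is a valid lower bound on $cost[x]$; then raising $C[x]$ to this minimum preserves the invariant that $C[x]$ never exceeds the cost of a shortest path from $v_0$ to $x$. Concretely, I would show that every path $\alpha$ from $v_0$ to $x$ satisfies $w(\alpha) \ge \min \{ C[u] + w[u,v] \mid (u,v) \in E \wedge fixed[u] \wedge \neg fixed[v] \}$, and then take the infimum over all such $\alpha$.

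First I would dispatch the trivial case: if $x$ is unreachable from $v_0$ then $cost[x] = \infty$ and the inequality holds vacuously. So assume $x$ is reachable and fix an arbitrary path $\alpha = \langle v_0 = u_0, u_1, \dots, u_m = x\rangle$. Throughout the main loop the source vertex $v_0$ is fixed while $x$ is non-fixed by hypothesis, so the Boolean sequence $fixed[u_0], fixed[u_1], \dots, fixed[u_m]$ begins with $true$ and ends with $false$; hence there is a smallest index $i$ with $fixed[u_i] \wedge \neg fixed[u_{i+1}]$. The edge $(u_i, u_{i+1})$ therefore belongs to the set of edges crossing from the fixed set to the non-fixed set, so this set is nonempty and the minimum on the right-hand side is well defined.

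Next I would bound $w(\alpha)$ from below using this crossing edge. The prefix $\langle u_0, \dots, u_i\rangle$ is a path from $v_0$ to $u_i$, so its weight is at least $cost[u_i]$; and since $u_i$ is fixed, the maintained invariant gives $C[u_i] \le cost[u_i]$ (in fact $C[u_i] = D[u_i] = cost[u_i]$ once $u_i$ is fixed, by Lemma~\ref{lem:SP1-outer}(a)). All remaining edges of $\alpha$ have positive weight, and the edge $(u_i,u_{i+1})$ contributes exactly $w[u_i, u_{i+1}]$, so $w(\alpha) \ge C[u_i] + w[u_i, u_{i+1}] \ge \min \{ C[u] + w[u,v] \mid (u,v) \in E \wedge fixed[u] \wedge \neg fixed[v] \}$. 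As $\alpha$ was arbitrary, $cost[x]$ is at least this minimum, which is exactly the claim. The only delicate point is ensuring the first fixed-to-non-fixed crossing edge exists; this is precisely where we rely on $v_0$ being fixed throughout the loop, which also guarantees the minimum is taken over a nonempty set.
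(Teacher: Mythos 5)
Your proof is correct and takes essentially the same route as the paper's: both locate a fixed-to-non-fixed crossing edge on a path from $v_0$ to $x$ (which must exist since $v_0$ is fixed and $x$ is not) and lower-bound the path's cost by $C[u]+w[u,v]$ for that edge, hence by the minimum over all crossing edges. Your write-up is merely more explicit --- quantifying over arbitrary paths, handling unreachability, and justifying the existence of the crossing edge --- where the paper argues directly on the shortest path.
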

\begin{proof}
Consider the shortest path from $v_0$ to $x$. Since $fixed[v_0]$ and $\neg fixed[x]$ there is an edge in the path from 
a  fixed vertex $u'$ to a non-fixed vertex $v'$. We get that $C[x] \geq C[v']$ and $C[v']  \geq \min  \{ C[u] + w[u,v] ~|~ (u,v) \in E \wedge fixed[u] \wedge \neg fixed[v] \}$.
\end{proof}

Since for a fixed vertex $u$, $C[u]$ equals $D[u]$, we get that for any non-fixed vertex $x$, 
$C[x] \geq \min \{ D[u] + w[u,v] ~|~ (u,v) \in E \wedge fixed[u] \wedge \neg fixed[v] \}$.
The right hand side is simply the minimum key in the min-heap $H$.

%Observe that the heap consists of 
%Let $(j, d)$ be any tuple removed from the heap $H$. 
%Then,  for any $x$ such that $\neg fixed[x]$, $C[x] \geq d$.
%%$C[x] \geq = \min \{  D[u] ~|~ \neg fixed[u] \}$.
%%C[u] + w[u,v] ~|~ (u,v) \in E \wedge fixed[u] \wedge \neg fixed[v] \}$.
%\end{lemma}
%\begin{proof}
%Since vertices are removed from the heap in the order of increasing shortest cost, we know that 
%
%
%using the out-version 
Finally, we also exploit the
method of \cite{crauser1998parallelization}.
\begin{lemma}\cite{crauser1998parallelization}\label{lem:crauser}
Let $threshold = \min \{ D[u] + w[u,v] ~|~ (u,v) \in E \wedge  \neg fixed[u] \}$.
Consider any non-fixed vertex $x$ with $D[x] \leq threshold$.
Then, $x$ can be marked as a fixed vertex.
\end{lemma}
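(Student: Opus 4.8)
The plan is to prove Lemma~\ref{lem:crauser} by a minimal-counterexample argument that mirrors the proof of Lemma~\ref{lem:pred}, but using the threshold bound in place of the predecessor count. Suppose for contradiction that there is a non-fixed vertex $x$ with $D[x] \leq threshold$ but $D[x] > cost[x]$, and among all such vertices pick one whose $cost$ value is smallest (breaking ties arbitrarily). Let $\alpha$ be a shortest path from $v_0$ to $x$, so the weight of $\alpha$ equals $cost[x] < D[x]$. By invariant (b) of Lemma~\ref{lem:SP1-outer}, $D[x]$ already equals the cost of the best path to $x$ whose internal vertices are all fixed, so $\alpha$ must contain at least one non-fixed vertex other than $x$ itself; equivalently, $\alpha$ uses at least one edge leaving a non-fixed vertex.

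Next I would isolate the \emph{first} non-fixed vertex $y$ along $\alpha$ (reading from $v_0$). Since all weights are strictly positive and $y$ precedes $x$ on $\alpha$ (or $y = x$), the prefix of $\alpha$ from $v_0$ to $y$ consists only of fixed vertices, so by invariants (a) and (b) that prefix has cost exactly $D[y]$, and the predecessor $u$ of $y$ on $\alpha$ is fixed. Hence the edge $(u,y)$ is of the form ``fixed $u$ to non-fixed $y$,'' so $D[u] + w[u,y] \geq threshold$ by the definition of $threshold$; and $D[u]+w[u,y]$ is exactly the cost of the prefix of $\alpha$ up to $y$. Combining, $cost[x] = \mathrm{weight}(\alpha) \geq \mathrm{(cost of prefix to } y) = D[u]+w[u,y] \geq threshold \geq D[x] > cost[x]$, where the middle inequality uses positivity of the remaining edges on $\alpha$ from $y$ to $x$ (the suffix has nonnegative, indeed positive-or-empty, weight). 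This is a contradiction, so no such $x$ exists, and every non-fixed $x$ with $D[x] \leq threshold$ has $D[x] = cost[x]$ and may be marked fixed.

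One subtlety I would be careful about is the boundary case $y = x$: here the prefix of $\alpha$ to $y$ is all of $\alpha$, its last edge $(u,x)$ leaves a fixed vertex and enters the non-fixed vertex $x$, so $D[u]+w[u,x] \geq threshold \geq D[x]$; but also $D[x] \leq D[u]+w[u,x]$ is false in general only if $x$ has already been relaxed from $u$ — instead I should argue directly that $cost[x] = D[u]+w[u,x] \geq threshold \geq D[x] > cost[x]$, the same contradiction, so this case needs no appeal to the minimality of $cost[x]$ at all. The genuine use of the minimal-counterexample hypothesis is only needed if one wants to also conclude $D[y] = cost[y]$ for the intermediate vertex, but in fact the argument above never needs that: it suffices that the prefix to $y$ is realized by fixed vertices, which invariant (b) gives unconditionally. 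So the main obstacle is really just bookkeeping — correctly identifying which edge of $\alpha$ is the ``fixed-to-non-fixed'' edge that the threshold controls, and verifying that positivity of weights makes the suffix contribute nonnegatively — rather than any deep difficulty; the proof is essentially a one-paragraph path-splitting argument once the invariants from Lemma~\ref{lem:SP1-outer} are in hand.
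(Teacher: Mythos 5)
Your overall strategy (assume a cheaper path, locate the first non-fixed vertex on it, and split the path there) is the same as the paper's, but the step where you invoke the definition of $threshold$ is applied to the wrong edge, and this breaks the proof. The lemma defines $threshold = \min \{ D[u] + w[u,v] \mid (u,v) \in E \wedge \neg fixed[u] \}$: the minimum is taken over edges whose \emph{tail} $u$ is \emph{non-fixed} (this is the ``out-version'' of Crauser et al.). You instead take the edge $(u,y)$ entering the first non-fixed vertex $y$ from its fixed predecessor $u$ and assert $D[u]+w[u,y] \geq threshold$ ``by the definition of $threshold$.'' Since $u$ is fixed, the pair $(u,y)$ is not in the set over which the minimum is taken, so no such inequality follows; the quantity $\min\{D[u]+w[u,v] \mid fixed[u] \wedge \neg fixed[v]\}$ is a different bound (essentially $H$.getMin(), cf.\ the discussion after Lemma~\ref{lem:crule2}) and is not comparable to the out-version $threshold$ in the direction you need. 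The same misreading reappears in your treatment of the boundary case $y=x$, and your final chain of inequalities collapses at exactly this link.

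The correct use of $threshold$, and what the paper's proof does, is to look at the edge \emph{leaving} the first non-fixed vertex $u'$ on the hypothetical cheaper path: if $u' \neq x$, the path continues along some edge $(u',v') \in E$ with $\neg fixed[u']$; the prefix up to $u'$ costs at least $D[u']$ by invariant (b) of Lemma~\ref{lem:SP1-outer}, so the path already costs at least $D[u'] + w[u',v'] \geq threshold \geq D[x]$, a contradiction. The case in which the only non-fixed vertex on the path is $x$ itself is disposed of separately: such a path passes only through fixed vertices before $x$, so invariant (b) already bounds its cost below by $D[x]$. (Your observation that no minimal-counterexample induction is needed is correct --- the paper's proof is a direct contradiction --- but it does not rescue the argument, because the inequality at its core is unjustified.)
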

\begin{proof}
Since $D[x] \leq threshold$, we know that $x$ is a discovered vertex and there is a path from $v_0$ to $x$. We show that this path has the shortest cost.
Suppose that 
there is another path with cost less than $D[x]$. This path must go through at least some non-fixed vertex because $D[x]$ already accounts
for all paths that go through only fixed vertices. Let $u'$ be the first non-fixed vertex on that path.
Then, the cost of that path is at least $threshold$ by the definition of $threshold$ giving us the contradiction.
\end{proof}

This lemma also allows us to mark multiple vertices as fixed and therefore  update $C$ for them.

To exploit Lemma \ref{lem:crauser}, we use two additional variables in $SP_3$.
The variable $outWeight[x]$ keeps the weight of the minimum outgoing edge from $x$.
This array is computed exactly once with the cumulative overhead of $O(e)$.
We also keep an additional binary heap $G$ as proposed in \cite{crauser1998parallelization}.
This heap keeps $D[u]+outWeight[u]$ for all non-fixed vertices. Clearly, the minimum value
of this heap is the required threshold. 
%Since for a fixed vertex $C[u]$ equals $D[u]$, this rule 
%Similarly, Lemma \ref{lem:inWeight} can also be shown using Eqn \ref{eqn:crule}.

%Consider any $x$ such that for any non-fixed predecessor $v$, $D[k] \leq d + w[v,k]$.
%either all its predecessors are fixed or for any p

{\small
\begin{figure}[htbp]\begin{center}
\fbox{\begin{minipage}  {\textwidth}\sf
\begin{tabbing}
\=x\=xxx\=xxx\=xxx\=xxx\=xxx\=xxx\=\kill
%\>globalState  {\bf function} getOptimalPath2()\\
%Same as $SP_1$ with the following changes:\\
\>\> {\bf var} $C,D$: array[$0 \ldots n-1$] of integer\\
\>\>\>  initially $\forall i: (C[i] = 0) \wedge (D[i] = \infty)$;\\
%\>\> \> $C$: array[$0 \ldots n-1$] of integer initially $\forall i: C[i] = 0$;\\
\> \> \> $G,H$: binary heap of $(j,d)$ initially empty; \\
%// min-heap to find vertex with minimum $D$\\
%\> \> \> $G$: binary heap of $(j,d)$ initially empty; // min-heap to find threshold for fixed vertices\\
\>\> \> $fixed$: array[$0 \ldots n-1$] of boolean \\
\>\>\>\>initially $\forall i: fixed[i] = false$;\\
\> \> \> $Q,R$: set of vertices initially empty;\\
\> \> \> $outWeight$: array[$0 \ldots n-1$] of integer initially\\
\>\>\>\> $\forall i: outWeight[i] = \min \{ w[i,j] ~|~ (i,j) \in E\}$;\\
%\\
\> \> $D[0] := 0$;\\
\> \> $H$.insert$(0, D[0])$; $G$.insert$(0, D[0] + outWeight[0])$;\\
\>      \> {\bf while} $\neg H$.empty() {\bf do}\\
 \> \> \> int threshold := $G$.getMin(); \\
 %// returns the minimum value in the heap $G$ without removing it \\
 \>   \>   \> {\bf while} ($H$.getMin() $\leq threshold)$ {\bf do} \\
\> \> \> \>  $(j,d) := H$.removeMin();\\
\> \> \> \>  $G$.remove$(j)$; \\   %// remove the vertex $j$ from $G$\\
\>\>\> \> $fixed[j]$ := $true$; \\
\>\> \> \> $C[j] := D[j]$;\\
\> \> \> \>  $R$.insert($j$); \\
\> \> \> \>  if ($H$.empty()) break; \\
 \>    \>  \> {\bf endwhile};\\
% \>\> \> $Q := \{ \}$; \\
  \> \>  \> {\bf while} $R \neq \{\}$ {\bf do}\\
  \> \>  \> \> {\bf forall} $z \in R$\\
\>\>   \> \> \>  $R := R - \{ z \}$\\
\>  \> \>  \> \> {\bf forall} $k$: $\neg fixed(k) \wedge (z,k) \in E$\\
\>  \> \>  \> \> \> processEdge3(z,k);\\
 \>    \>  \> {\bf endwhile};\\
  \> \>  \>  {\bf forall} $z \in Q$:\\
  \> \>  \> \> $Q$.remove($z$);\\
  \> \>  \> \>   if $\neg fixed[z]$ then\\
\> \>  \> \>  \> \{ $H$.insertOrAdjust $(z, D[z])$;\\
\> \>  \> \>  \> $G$.insertOrAdjust($z, D[z]+outWeight[z])$;\}\\

  \>  \> {\bf endwhile};\\
  \\
\> \> {\bf procedure} processEdge3($z,k$);\\
\>\> {\bf var} $changed$: boolean initially false;\\
\>\>\> $minD, minU$: int initially $\infty$;\\
%\\
\> \> // step 1:  edge relaxation\\
 \> \> {\bf if} ($D[k] > D[z]+w[z,k]$) then \\
 \> \> \>$D[k] := D[z] + w[z,k]$;\\
  \> \> \>$changed$ := true;\\
\\
%\> \> // step 2: find minimum for all incoming edges from vertices discovered but not fixed\\
\> \> // step 2: Update $C[v]$ for all predecessors $v$ of $k$ \\
\>\> {\bf forall} $v: \neg fixed[v] \wedge ((v,k) \in E)$\\
\> \>\>   $C[v] := \max (C[v], H$.getMin());\\
%\> \>\>   $C[v] := \max (C[v], threshold )$;\\
%\> \> $minD := \min \{ C[v] + w[v,k] ~|~ v: (v,k) \in E, C[v] > 0, \neg fixed[v] \}; $  \\
%// step 2: min of edges from discovered but not fixed\\
%\> \> // step 3: find minimum for all incoming edges from vertices not discovered yet\\
%\> \>  $minU := d + \min  \{ w[v,k] ~|~ v: (v,k) \in E, \neg fixed[v], C[v] = 0\} ; $\\
\\
\> \> // step 3: Update $C$ via Eqn. \ref{eqn:crule}\\
\> \>   $C[k] := \max (C[k], \min \{ C[v] + w[v,k] ~|~  ((v,k)  \in E) \})$;\\

%\> \> // $D[k]$ is the minimum for all incoming edges from the fixed vertices\\
%\> \> $C[k] := \min (D[k],  minD, minU))$   \\
%\> \> $C[k] := \max (C[k], \min (D[k],  minD, minU))$;\\
%// step 4: Update $C$ via Eqn. \ref{eqn:crule} \\
%\> \> $C[k] := \min  \{ C[v] + w[v,k] ~|~ (v,k) \in E\}$;\\
\\
\> \> // step 4: check if vertex $k$ is fixed \\
 \> \>  {\bf if} $(C[k] = D[k])$ then \\
 \> \> \> $fixed[k] := true$;\\
 \> \> \>  $R := R \cup \{ k \};$\\
 \> \> \>  $G$.remove($k$); $H$.remove($k$);\\
    \>\>  else if   ($changed \wedge (k \not \in Q)$) then $Q$.insert$(k)$;

 %\> \> \>  if $(changed)$ then \\
%$ \> \> \> \> $H.insertOrAdjust(k, D[k])$; $G.insertOrAdjust(k, D[k]+outWeight[k])$;\\
 \end{tabbing}
\end{minipage}
} % end \fbox
\end{center}
\caption{Algorithm $SP_3$: Using upper bounds as well as lower bounds \label{fig:alg-sp3}}
\end{figure}
}

Our third algorithm $SP_3$ is shown in Fig. \ref{fig:alg-sp3}.
We first remove from the heap $H$ all those non-fixed vertices $j$ such that $D[j] \leq threshold$.
All these vertices are marked as fixed.
Also, whenever any vertex is added or removed from the heap $H$, we also apply the same operation on the heap $G$.
In $SP_3$, it is more convenient to keep only the non-fixed vertices in $G$ and $H$.
All the vertices that are marked as fixed are removed from both $G$ and $H$.
Note that the deletion from the heap is only a virtual operation. It simply corresponds to marking that vertex as fixed.
%In an implementation, there is no operation required for removal.
Whenever a vertex is removed from any of the heaps in the {\tt removeMin} operation, and it is a fixed vertex, the algorithm 
simply discards that vertex and continues. Hence, vertices are physically removed only via {\tt removeMin} operation.
The {\tt getMin} operation removes any fixed vertex via {\tt removeMin}, so that {\tt getMin} applies only to the non-fixed vertices.
Whenever vertices are removed from $H$ via {\tt removeMin} operation, 
they are inserted in $R$ which 
explores them using {\rm processEdge3}.

Whenever we process an edge $(z,k)$, we update $D[k]$ as well as $C[k]$.
If $C[k]$ and $D[k]$ become equal then $v_k$ is marked as a fixed vertex; otherwise,
if $D[k]$ has changed then it is inserted in $Q$ for later processing.
%its label is inserted or adjusted in the heap.

To update $C[k]$, we first apply Lemma \ref{lem:crule2} to all the non-fixed predecessors of $k$, and then use Eqn. \ref{eqn:crule} to update
$C[k]$.
To apply Lemma \ref{lem:crule2}, we set $C[v]$ for any non-fixed predecessor vertex $v$ as the maximum of its previous value and
$H.getMin()$.
The method {\rm processEdge3} takes time $O(max(\log n, \Delta))$ where $\Delta$ is the maximum in-degree of any vertex.
%Assuming that $\Delta$ is less than $\log n$ the asymptotic complexity of this method is same as that for processing an edge
%in Dijkstra's algorithm. Most sparse graphs that occur in practice satisfy this assumption. However, for the dense graphs where $\Delta$
%may be close to $n$, the {\rm processEdge3} operation is more  expensive than {\rm processEdge2}.

We now show that $SP_3$ generalizes $SP_2$ (which, in turn, generalizes $SP_1$).

\begin{theorem}
Any vertex marked fixed by $SP_2$ in any iteration is also fixed by $SP_3$ in that iteration or earlier.
\end{theorem}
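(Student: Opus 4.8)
The plan is to compare the two algorithms step by step, showing that every mechanism by which $SP_2$ marks a vertex fixed is subsumed by a mechanism in $SP_3$. I would proceed by induction on the iterations of the outer while loop, maintaining as an inductive hypothesis that at the start of each outer iteration the set of fixed vertices in $SP_3$ contains the set of fixed vertices in $SP_2$, and that for every vertex fixed in both, the $D$-values agree (both equal $cost$, by Lemma~\ref{lem:SP1-outer}(a) and its analogue for $SP_3$). Since the two algorithms share the same skeleton (the same outer loop, the same $R$-processing inner loop, the same $Q$ mechanism), the only thing that differs is the edge-processing procedure: $SP_2$ uses {\em processEdge2} and $SP_3$ uses {\em processEdge3}.

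First I would handle the outer loop itself. In $SP_2$ a single vertex --- the heap minimum --- is removed and marked fixed per outer iteration (inherited from $SP_1$); in $SP_3$ all non-fixed vertices with $D[j] \le threshold$ are removed, where $threshold = G.getMin() = \min\{D[u] + outWeight[u] \mid \neg fixed[u]\}$. Since the current heap minimum $D[j^\ast]$ certainly satisfies $D[j^\ast] \le D[j^\ast] + outWeight[j^\ast]$, and the latter is one of the terms in the minimum defining $threshold$, we have $D[j^\ast] \le threshold$, so $SP_3$ fixes at least the vertex $SP_2$ would fix here (and possibly more, via Lemma~\ref{lem:crauser}). This establishes the containment for the outer-loop mechanism.

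Next, the inner-loop mechanism. In {\em processEdge2}, a non-fixed vertex $k$ discovered from $z$ in an iteration with heap-min value $d$ is fixed when $(pred[k] = 0) \vee (D[k] \le d + inWeight[k])$. I would argue each disjunct is captured by {\em processEdge3}. For the $pred[k] = 0$ case: once all predecessors of $k$ are fixed, then by the inductive hypothesis they are fixed in $SP_3$ too with $C[v] = D[v]$, so step~3 of {\em processEdge3} sets $C[k] \ge \min\{C[v] + w[v,k]\} = \min\{D[v]+w[v,k]\}$, while edge relaxation in step~1 guarantees $D[k] \le \min\{D[v]+w[v,k]\}$; combined with $C[k] \le D[k]$ this forces $C[k] = D[k]$ and $k$ is fixed in step~4 --- exactly the argument already given in the text after Lemma~\ref{lem:crule}. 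For the $D[k] \le d + inWeight[k]$ case: here $inWeight[k] = \min\{w[v,k] \mid (v,k)\in E, v \neq z\}$ and $d$ is the heap minimum at that iteration, so $d = H.getMin()$; step~2 of {\em processEdge3} raises $C[v]$ to at least $H.getMin() = d$ for every non-fixed predecessor $v$ of $k$, and then step~3 gives $C[k] \ge \min\{C[v] + w[v,k]\}$ over all predecessors. For non-fixed predecessors $v \neq z$ this term is $\ge d + w[v,k] \ge d + inWeight[k] \ge D[k]$; for the discovering predecessor $z$ (which is fixed, since $SP_2$ processes edges only out of fixed vertices in $R$), $C[z] + w[z,k] = D[z] + w[z,k] \ge D[k]$ after relaxation; for any already-fixed predecessor the same holds. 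Hence $C[k] \ge D[k]$, so $C[k] = D[k]$ and $k$ is fixed.

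The main obstacle I anticipate is bookkeeping the timing/ordering carefully: I must make sure that when {\em processEdge3} needs $C[v] = D[v]$ for a fixed predecessor $v$, that equality has indeed been established (it is, because $SP_3$ sets $C[j] := D[j]$ at the moment $j$ is removed from $H$, and also whenever $j$ is fixed in step~4), and that the value of $d = H.getMin()$ used in the $SP_2$ test matches the $H.getMin()$ available during the corresponding {\em processEdge3} call in the same outer iteration --- the subtlety being that $SP_3$ may have already removed extra vertices from $H$, which only \emph{raises} $H.getMin()$, making the inequality $C[v] \ge H.getMin()$ at least as strong. I would also note explicitly that any vertex not yet even discovered by $SP_2$ is irrelevant to the claim, so the induction only needs to track discovered vertices. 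Assembling these cases, every vertex fixed by $SP_2$ in a given iteration is fixed by $SP_3$ in that iteration or earlier, completing the induction.
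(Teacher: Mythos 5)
Your proposal is correct and follows essentially the same route as the paper: a case analysis on the two mechanisms by which $SP_2$ fixes a vertex ($pred[k]=0$ versus $D[k]\le d+inWeight[k]$), showing each forces $C[k]=D[k]$ in $SP_3$ via steps 2 and 3 of \emph{processEdge3}. Your version is in fact slightly more careful than the paper's, since you make the induction explicit, note that the outer-loop heap-minimum removal is subsumed by the threshold test, and treat fixed predecessors $v\ne z$ separately (via prior relaxation) rather than folding them into the $C[v]\ge d$ bound that step 2 only guarantees for non-fixed predecessors.
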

\begin{proof}
$SP_2$ fixes a vertex when $pred[k]$ equals zero, or when $D[k] \leq d + inWeight[k]$.
When $pred[k]$ equals zero, all the predecessors of $v_k$ are fixed and their $C$ value matches their $D$ value. Therefore, 
$C[k] := \max (C[k], \min \{ C[v] + w[v,k] ~|~  ((v,k)  \in E) \})$ guarantees that
$C[k] \geq  \min \{ D[v] + w[v,k] ~|~  ((v,k)  \in E)  = D[k]$. Therefore, vertex $k$ is marked as fixed.

Now suppose that $D[k] \leq d + inWeight[k]$ in $SP_2$.
Let $z$ be the vertex that discovered $k$ in $SP_2$. Then $D[k] \leq d + inWeight[k]$ implies
$D[k] \leq \min(D[k], d + inWeight[k])$.
Since $D[k] \leq D[z] + w[z,k]$, we get that
 $D[k] \leq \min((D(z)+ w[z,k]), d + inWeight[k]$.
From the definition of $inWeight[k]$, we get that
$D[k] \leq \min((D(z)+ w[z,k]), d + \min \{ w[v,k] ~|~ (v,k) \in E \wedge (v \neq z) \} )$.
Since $z$ is a fixed vertex, we get
$D[k] \leq \min((C(z)+ w[z,k]),  \min \{ d + w[v,k] ~|~ (v,k) \in E \wedge (v \neq z) \} )$.
Since $C[v]$ for all predecessors of $k$ is set to at least $d$ in step 2 of $SP_3$, we get that
$D[k] \leq \min((C(z)+ w[z,k]),  \min \{ C[v] + w[v,k] ~|~ (v,k) \in E \wedge (v \neq z) \} )$.
By combining two arguments of the $\min$, we get 
$D[k] \leq \min \{ C[v] + w[v,k] ~|~ (v,k) \in E \} $.
 The right hand side is $C[k]$ due to assignment of $C[k]$ at step 4.
Since $D[k] \leq C[k]$, we get that vertex $k$ is marked as fixed.
\end{proof}

We now show that any vertex marked fixed by out-version or in-version of \cite{crauser1998parallelization} is also marked fixed by $SP_3$.
\begin{lemma}
(a)  $SP_3$ fixes any vertex $k$ such that $D[k] \leq  \min \{ D[x] + w[x,y] ~|~ \neg fixed(x) \}$.\\
(b) $SP_3$ fixes any vertex $k$ such that  $D[k] \leq \min \{D[y] ~|~ \neg fixed(y) \} + \min \{ w[v,k] ~|~ (v,k) \in E \}$.
\end{lemma}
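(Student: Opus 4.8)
The plan is to derive both parts from the lemmas already established for $SP_3$, by showing that the $SP_3$ update to $C[k]$ in \emph{processEdge3} dominates the thresholds used in the out-version and in-version of \cite{crauser1998parallelization}. First recall the two facts we will lean on. From Lemma \ref{lem:crauser}, a non-fixed vertex $x$ with $D[x] \leq \min\{D[u]+w[u,v] \mid (u,v)\in E \wedge \neg fixed[u]\}$ has $D[x]=cost[x]$; and from Lemma \ref{lem:crule2} together with step 2 of \emph{processEdge3}, for every non-fixed predecessor $v$ of $k$ the value $C[v]$ is raised to at least $H.\mathrm{getMin}()=\min\{D[u]+w[u,y]\mid (u,y)\in E \wedge fixed[u]\wedge \neg fixed[y]\}$. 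Combining these with $C[v]=D[v]$ for fixed vertices and the assignment $C[k] := \max(C[k],\min\{C[v]+w[v,k]\mid (v,k)\in E\})$ in step 3 is what drives the argument.

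For part (a): suppose $D[k] \leq \min\{D[x]+w[x,y] \mid \neg fixed(x)\}$. I would show $D[k] \leq C[v]+w[v,k]$ for every predecessor $v$ of $k$, so that the step-3 assignment forces $C[k] \geq D[k]$, hence $C[k]=D[k]$ and $k$ is fixed in step 4. Split on whether $v$ is fixed. If $v$ is fixed, then $C[v]=D[v]$ and $D[k] \leq D[v]+w[v,k]$ holds by the edge-relaxation invariant (Lemma \ref{lem:SP1-outer}(b), inherited by $SP_3$), i.e. $D[k]$ already accounts for all paths through fixed vertices. If $v$ is non-fixed, then $C[v]+w[v,k] \geq \min\{D[x]+w[x,y]\mid \neg fixed(x)\} \geq D[k]$ once we note that after step 2 we have $C[v] \geq D[v]$—actually the cleaner route is: $C[v]+w[v,k]$ where $C[v]$ was raised in step 2, but to reach the quantity $D[v]+w[v,k]$ we instead observe directly that $D[v]+w[v,k]$ is one of the terms in $\min\{D[x]+w[x,y]\mid \neg fixed(x)\}$, which is $\geq D[k]$; and $C[v] \geq$ nothing forces this, so the correct observation is that the hypothesis bounds $D[k]$ by $D[v]+w[v,k]$ and we need $C[v] \geq D[v]$, which is false in general. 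The right fix is to invoke Lemma \ref{lem:crauser} itself: the hypothesis of (a) is exactly the hypothesis of Lemma \ref{lem:crauser} with $threshold$ the out-version threshold, so $k$ is fixed by the explicit $threshold$-test in the outer while loop of $SP_3$ (the line {\bf while} ($H.\mathrm{getMin}() \leq threshold$)), since $outWeight$ and the heap $G$ encode precisely this threshold. So part (a) is essentially immediate from the construction of $G$ and Lemma \ref{lem:crauser}.

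For part (b): suppose $D[k] \leq \min\{D[y]\mid \neg fixed(y)\} + \min\{w[v,k]\mid (v,k)\in E\}$. Here I would argue through $C[k]$. For a fixed predecessor $v$, $C[v]=D[v]$ and $D[k]\leq D[v]+w[v,k]$ as before. For a non-fixed predecessor $v$: after step 2, $C[v] \geq H.\mathrm{getMin}() \geq \min\{D[y]\mid \neg fixed(y)\}$, since every non-fixed vertex is in $H$ (invariant from Lemma \ref{lem:SP1-outer}(c), inherited). Hence $C[v]+w[v,k] \geq \min\{D[y]\mid \neg fixed(y)\} + w[v,k] \geq \min\{D[y]\mid \neg fixed(y)\} + \min\{w[u,k]\mid (u,k)\in E\} \geq D[k]$ by hypothesis. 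Taking the minimum over all predecessors $v$, the step-3 assignment yields $C[k] \geq D[k]$; with $C[k]\leq D[k]$ always, we get equality and $k$ is fixed in step 4. The only subtlety, and the main thing to get right, is the quantifier bookkeeping in the mixed case where $k$ has both fixed and non-fixed predecessors: one must check that the $\min$ in step 3 ranges over \emph{all} in-edges $(v,k)\in E$ (which the pseudocode confirms), so that the lower bound just derived applies uniformly and the per-predecessor inequality $D[k]\leq C[v]+w[v,k]$ combines correctly into $D[k]\leq\min_v (C[v]+w[v,k])=C[k]$. I expect this quantifier-handling, rather than any deep idea, to be the only real obstacle; everything else is a direct appeal to Lemmas \ref{lem:crule2} and \ref{lem:crauser} and the heap invariants already proved.
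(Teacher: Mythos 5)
Your proof is correct and follows essentially the same route as the paper: part (a) is discharged by the $threshold$ test implemented via the heap $G$ (i.e., Lemma \ref{lem:crauser}), and part (b) chains $D[k] \leq H.\mathrm{getMin}() + \min_v w[v,k] \leq \min_v (C[v]+w[v,k]) \leq C[k]$ through steps 2 and 3 of \emph{processEdge3}. Your explicit split on fixed versus non-fixed predecessors in (b) is in fact slightly more careful than the paper's one-line claim that step 2 bounds $C[v]$ below by $H.\mathrm{getMin}()$ for all predecessors (for fixed $v$ one instead needs $D[k] \leq D[v]+w[v,k] = C[v]+w[v,k]$ from the relaxation invariant); the visible false start at the beginning of your part (a) should simply be deleted.
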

\begin{proof}
(a) follows from $threshold$ computed and marking of vertices as fixed based on that.\\
(b) Suppose $D[k] \leq \min \{D[y] ~|~ \neg fixed(y) \} + \min \{ w[v,k] ~|~ (v,k) \in E \}$.
The first part of the sum is equal to $H.getMin()$ due to the property of $H$.
Therefore, this expression is equal to $ \min \{ H.getMin() + w[v,k] ~|~ (v,k) \in E \}$.
From Step 2 in $SP_3$, this expression is at most
$ \min \{ C(v) + w[v,k] ~|~ (v,k) \in E \}$. From step 3, we get this expression to be at most
$ C[k] $. Therefore, $D[k] \leq C[k]$ and $k$ is fixed.

%(b) $D[k] \leq 
%\min \{D[y] ~|~ \neg fixed(y) \} + \min \{ w[v,k] ~|~ (v,k) \in E \}$\\
%$= H.getMin() + \min \{ w[v,k] ~|~ (v,k) \in E \}$ \{ property of $H$ \}\\
%$ = \min \{ H.getMin() + w[v,k] ~|~ (v,k) \in E \}$ \\
%$ = \min \{ C(v) + w[v,k] ~|~ (v,k) \in E \}$  \{ Step 2 in $SP_3$ \}\\
%$ \leq C[k] $ \{ Setting of $C[k]$ in step 2 \}.
%Hence, $k$ is fixed.
\end{proof}

The following Theorem summarizes properties of $SP_3$.
\begin{theorem}
Algorithm $SP_3$ computes the cost of the shortest path from the source vertex $v_0$ to all other vertices in  $O(n + e(\max(\log n, \Delta)))$ time, where
$\Delta$ is the maximum in-degree of any vertex.
%for cyclic graphs and $O(e)$ time for acyclic graphs.
%where $\Delta$ is the maximum in-degree of the graph. For acyclic graph, it takes $O(e)$ time.
\end{theorem}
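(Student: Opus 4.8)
The plan is to prove the theorem in two stages — correctness of the $D$-values returned, then the running-time bound — reusing as much of the $SP_1$/$SP_2$ analysis as possible, since edge relaxation, the sets $Q$ and $R$, and the heap bookkeeping are essentially unchanged in $SP_3$.

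For correctness I would carry a strengthened invariant at the outer and inner \texttt{while} loops: (i) $C[x]\le cost[x]\le D[x]$ for every vertex $x$; (ii) $D[x]$ equals the cost of a cheapest path from $v_0$ to $x$ all of whose vertices before $x$ are fixed (this is exactly Lemma~\ref{lem:SP1-outer}(b), and it is preserved because edge relaxation is unchanged); and (iii) $fixed[x]\Rightarrow D[x]=cost[x]$, together with the heap-membership statement of Lemma~\ref{lem:SP1-outer}(c) now applied to both $H$ and $G$. The only genuinely new content is (i). Its upper-bound half follows from (ii) and $D\equiv\infty$ initially; its lower-bound half $C[x]\le cost[x]$ holds initially because $C\equiv 0$ and all weights are positive, and is preserved by each of the three writes to $C$: the write $C[j]:=D[j]$ in the inner loop is sound once $D[j]=cost[j]$ there, which is precisely Lemma~\ref{lem:crauser} with $threshold=G.getMin()$; the write in step~2 of \textit{processEdge3} raises $C[v]$ to at most $H.getMin()$, which is a valid lower bound for every non-fixed $v$ by Lemma~\ref{lem:crule2}; and the write in step~3 is the rule of Lemma~\ref{lem:crule}. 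Granting (i)--(iii), soundness of fixing is immediate: $SP_3$ sets $fixed[x]$ either in the inner $H$-loop when $D[x]\le threshold$ (so $D[x]=cost[x]$ by Lemma~\ref{lem:crauser}) or in step~4 of \textit{processEdge3} when $C[x]=D[x]$ (so $D[x]=cost[x]$ by (i)); in either case (iii) is maintained.

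For termination and completeness I would first note that $H$ contains only non-fixed vertices and that each time the inner $H$-loop is reached — once per outer iteration — it extracts and fixes at least one vertex, because $H.getMin()=\min\{D[u]\mid\neg fixed[u]\}\le\min\{D[u]+outWeight[u]\mid\neg fixed[u]\}=threshold$ as $outWeight>0$, so the loop body runs at least once; hence the outer loop executes at most $n$ times and the algorithm halts. In particular every outer iteration fixes at least the current minimum-$D$ vertex of $H$, exactly as Dijkstra's algorithm and $SP_1$ do, so the induction on the rank of a vertex in the sorted order of $cost$ values used in the $SP_1$ correctness theorem applies verbatim: every vertex reachable from $v_0$ is eventually discovered and fixed with $D=cost$, and every unreachable vertex keeps $D=\infty$.

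For the running time, initialization is $O(n)$ and $outWeight$ is computed once in $O(e)$. A vertex is put into $R$ exactly when it becomes fixed, hence it is explored exactly once, so \textit{processEdge3}$(z,k)$ is called at most once per edge; each call costs $O(\max(\log n,\Delta))$ because steps~2 and~3 scan the at most $\Delta$ predecessors of $k$ and the heap removals cost $O(\log n)$, for $O(e\max(\log n,\Delta))$ overall. Every \texttt{insertOrAdjust} on $H$ or $G$ is caused by a strict decrease of some $D[k]$, of which there are $O(e)$, and the at most $n$ \texttt{removeMin}/\texttt{remove} operations each cost $O(\log n)$; since the standing assumption that every non-source vertex has an incoming edge forces $e\ge n-1$, these $O(e\log n)$ and $O(n\log n)$ contributions are absorbed, leaving $O(n+e\max(\log n,\Delta))$. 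I expect the main obstacle to be the careful verification of invariant~(i): specifically, that the lazy, out-of-order raising of $C[v]$ for non-fixed predecessors in step~2 never overshoots $cost[v]$, and that Lemma~\ref{lem:crauser} legitimately justifies fixing and setting $C[j]:=D[j]$ for \emph{every} vertex extracted in a single pass of the inner $H$-loop, not merely the first, since fixing the earlier ones shifts the set of non-fixed vertices implicitly referenced in that lemma's hypothesis.
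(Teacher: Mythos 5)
First, a point of comparison: the paper states this theorem without any accompanying proof, so your attempt cannot be measured against an ``official'' argument and has to stand on its own. Its overall architecture --- the sandwich invariant $C[x]\le cost[x]\le D[x]$ carried alongside the $SP_1$ invariants, soundness of each of the three writes to $C$ via Lemmas~\ref{lem:crauser}, \ref{lem:crule2} and~\ref{lem:crule}, termination because each outer iteration extracts at least the minimum-$D$ vertex (your observation that $H.\mathrm{getMin}()\le threshold$ since $outWeight>0$ is correct), and the charging scheme for the running time (one call to \emph{processEdge3} per edge at cost $O(\max(\log n,\Delta))$, plus $O(e\log n)$ heap traffic absorbed because $e\ge n-1$) --- is reasonable and is essentially the argument the paper's preceding lemmas are set up to support.

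However, the obstacle you flag in your last sentence is not a routine verification; it is a genuine gap, and it cannot be closed from Lemma~\ref{lem:crule2} as stated. That lemma lower-bounds $C[x]$ by $\min\{C[u]+w[u,v]\}$ taken over edges from fixed to non-fixed vertices, and the identification of this quantity with $H.\mathrm{getMin}()$ is valid only when every fixed vertex has been explored, i.e., at the top of the outer loop when $R$ is empty. Step~2 of \emph{processEdge3} calls $H.\mathrm{getMin}()$ in the middle of the inner $R$-loop, when $R$ may still contain fixed vertices $u$ whose outgoing edges have not yet been relaxed. For such a $u$ one can have $D[u]+w[u,y]<H.\mathrm{getMin}()$ --- for instance when $u$ was extracted in the inner $H$-loop with $D[u]\le threshold$, its successor $y$ is still undiscovered, and every vertex remaining in $H$ has $D$ far above $threshold$ --- so a non-fixed $y$ can satisfy $cost[y]<H.\mathrm{getMin}()$, and step~2 then raises $C[y]$ above $cost[y]$, destroying your invariant~(i). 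Once $C$ can overshoot, the test $C[k]=D[k]$ in step~4 can in principle fire at a value $D[k]>cost[k]$, so this is a correctness issue rather than a harmless loss of tightness. A complete proof needs either an additional argument that this configuration cannot arise, or a repair such as using the value of $H.\mathrm{getMin}()$ captured at the top of the outer loop (where the boundary identification is valid) instead of the current one. The rest of your argument, including the complexity accounting, I find convincing.
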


\section{Algorithm $SP_4$: A Parallel Label-Correcting Algorithm}
In this section we present an algorithm when a large number of cores are available.
The goal of the algorithm is to decrease the value of $D$ and increase the value of $C$ in 
as few iterations of the {\em while} loop as possible. 
All our earlier algorithms explore only
fixed vertices with the motivation of avoiding multiple edge-relaxation of the same edge (in the spirit of Dijkstra's algorithm).
In contrast, $SP_4$ is a label-correcting algorithm that relaxes as many edges as possible in each iteration (in the spirit of Bellman-Ford algorithm).
Similarly, it recomputes $C$ for as many vertices as possible and terminates faster than $SP_3$.

{\small
\begin{figure}[htbp]\begin{center}
\fbox{\begin{minipage}  {\textwidth}\sf
\begin{tabbing}
\=x\=xxx\=xxx\=xxx\=xxx\=xxx\=xxx\=\kill
%\>globalState  {\bf function} getOptimalPath2()\\
%Same as $SP_1$ with the following changes:\\
\>\> {\bf var} $D$: array[$0 \ldots n-1$] of integer\\
\>\>\> initially $\forall i: D[i] := \infty$;\\
%\>\> \> $discovered$: array[$0 \ldots n-1$] of boolean initially $\forall i: discovered[i] := false$;\\
\>\> \> $fixed$: array[$0 \ldots n-1$] of boolean\\
\>\>\>\> initially $\forall i: fixed[i] := false$;\\
%\>\> \> $changed$: array[$0 \ldots n-1$] of boolean initially $\forall i: changed[i] := false$;\\
\>\> \> $C$: array[$0 \ldots n-1$] of integer\\
\>\>\>\> initially $\forall i: C[i] = 0$;\\
\> \> \> $outWeight$: array[$0 \ldots n-1$] of integer initially\\
\>\>\>\> $\forall i: outWeight[i] = \min \{ w[i,j] ~|~ (i,j) \in E\}$;\\
\>\> \> $Dout$: array[$0 \ldots n-1$] of integer\\
\>\>\>\> initially $\forall i: Dout[i] = \infty$;\\
%\>\> \> $E$: array[$0 \ldots m$] of \{ state:  initially $\forall i: E[i] = 0$;\\
\> \> \>  int $threshold$; \\
\> \> \>  int $minD$; \\
\\
%\> \>  $discovered[0] := true;$\\
\> \>  $D[0] := 0;$\\
\> \>  $Dout[0] := D[0] + outWeight[0];$\\
\>\>   boolean $changed := true;$\\
\\
\> \> {\bf while} ($changed \wedge (\exists i: \neg fixed[i] \wedge (D[i] < \infty)$) \\
\>\>\>    $changed := false;$\\
\\
\> \>  \> // Step 1: find the minimum value of $D$ and $Dout[x]$  \\
\> \> \>  $threshold$ := min $Dout[x]$ of all non-fixed vertices; \\
\> \> \>  $minD$ := min $D[x]$ of all non-fixed vertices; \\

\\
\> \>  \> // Step 2: Fix all vertices with $D[x] \leq threshold$\\
\> \> \>  {\bf forall} $x$ such that $(D[x] \leq threshold)$ {\bf in parallel}\\
%\>\> \> \> if $(D[x] \leq threshold)$ then\\
\>\> \> \>   $fixed[j] := true;$\\
\>\> \> \>  $C[j] := D[j];$\\
\\
\> \>  \> // Step 3: Update $D$ values\\
\> \> \>  {\bf forall} $x,y$ such that $(D[x] < \infty)$\\
\>\>\>\>  $\wedge \neg fixed[y] \wedge ((x, y)  \in E)$ {\bf in parallel}\\
\>\> \> \> if $(D[y] > D[x] + w[x,y])$ then\\
\>\> \> \> \> $D[y] := D[x] + w[x,y]$;\\
\>\> \> \> \> $Dout[y] := D[y] + outWeight[y]$;\\
\>\> \> \> \> $changed := true$;\\
\\
\> \>  \> // Step 4: Update $C$ values\\
\> \> \>  {\bf forall} $y$ such that $\neg fixed[y]$ {\bf in parallel}\\
\> \> \> \>  $C[y] := \max (C[y], minD )$;\\
\> \> \>  {\bf forall} $y$ such that $\neg fixed[y]$ {\bf in parallel}\\
\> \> \> \>  $C[y] := \max (C[y],$ \\
\>\>\>\>\>   $\min \{ C[x] + w[x,y], (x, y)  \in E)\})$;\\
\\
\> \>  \> // Step 5: Update $fixed$ values\\
\> \> \>  {\bf forall} $y: \neg fixed[y] \wedge (D[y] < \infty)$ {\bf in parallel}\\
\>\> \> \>  if ($C[y] = D[y]$) $fixed[y] := true$;\\
\> \> {\bf endwhile};
 \end{tabbing}
\end{minipage}
} % end \fbox
\end{center}
\caption{Algorithm $SP_4$: A Bellman-Ford Style Algorithm with both upper and lower bounds  \label{fig:alg-sp4}}
\end{figure}
}

The algorithm is shown in Fig. \ref{fig:alg-sp4}.
We use an outer while loop that is executed so long as $changed \wedge (\exists i: \neg fixed[i] \wedge (D[i] < \infty)$.
The variable $changed$ is used to record if any vertex changed its $D$ value. This is a well-known optimization of
Ford-Bellman algorithm for early termination.
If $D$ did not change in the last iteration of the while loop, we have reached the fixed point for $D$ and it is equal to 
$cost$. Even if $D$ changed for some vertices but all vertices are fixed, then their $D$ values cannot change and we can terminate the
algorithm. The conjunct $(D[i] < \infty)$ allows us to restrict the algorithm to examine only discovered vertices.

In step 1, we find $threshold$ equal to the minimum of all $Dout$ values of non-fixed vertices just as in $SP_3$.
We also find $minD$ equal to the minimum of all $D$ values for non-fixed vertices.
With $n$ processors this step can be done in $O(\log \log n)$ time and $O(n)$ work on a common-CRCW PRAM with the
standard technique of using a doubly logarithmic tree and cascading\cite{jaja1992introduction}.
In step 2, we fix all the vertices that have $D$ values less than or equal to the threshold. This step can be done in $O(1)$ time and $O(n)$ work.
In step 3, we first explore all the discovered vertices. All vertices adjacent to these vertices become discovered if 
they have not been discovered earlier. In addition, we also relax all the {\em incoming} edges to vertices that are not fixed.
Clearly, this is equivalent to relaxing
all edges as in the Bellman-Ford algorithm because for fixed vertices their $D$ value cannot decrease.
%Note that Bellman-Ford algorithm does not use $fixed$ variable and therefore must relax all edges.
This step can be performed in $O(1)$ time and $O(e)$ work with $e$ cores.
In step 4, we compute lower bounds for all non-fixed vertices. 
We first update $C$ for all non-fixed vertices to be at least as large as $minD$.
In the second parallel step, we simply apply Eqn. \ref{eqn:crule} to update all $C$'s for
all non-fixed vertices. 
This step can also be performed in $O(1)$ time and $O(e)$ work with $e$ cores.
In step 5, we recompute the array $fixed$ based on $C$ and $D$. This step can be performed in $O(1)$ time
and $O(n)$ work.
The total number of iterations is at most $n$ giving us the parallel time complexity of $O(n \log \log n)$ and work 
complexity of $O(n e)$. The number of iterations in $SP_4$ algorithm is 
less than or equal to the number of iteration required by $SP_3$.
We now show the following property of $SP_4$.
\begin{theorem}
Algorithm $SP_4$ computes the cost of the shortest path from the source vertex $v_0$ to all other vertices in time $O(n \log \log n)$
and work $O(n e)$ with $e$ processors. 
%In a random graph on $n$ vertices in which every edge is present with probability $d/n$ and
%the edge weights are uniformly distributed in $[0,1]$, the algorithm takes $O(\sqrt{n} \log \log n)$ time and work
%$O(\sqrt{n}(e + n \log \log n))$ with high probability.
\end{theorem}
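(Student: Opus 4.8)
The plan is to establish two things: partial correctness (the output equals $cost$) and the stated resource bounds. For correctness, I would maintain exactly the invariants from $SP_3$: (i) $D[x]$ is always the cost of a feasible path (or $\infty$), so $D[x] \geq cost[x]$; (ii) $C[x]$ is always a valid lower bound, $C[x] \leq cost[x]$; and (iii) $fixed[x] \Rightarrow D[x] = cost[x]$. Invariant (ii) follows for Step 4 exactly as in Lemmas~\ref{lem:crule} and~\ref{lem:crule2} — the only subtlety is that $minD$ here is the minimum $D$ value over non-fixed vertices, which plays the role of $H.getMin()$ in $SP_3$, and the argument of Lemma~\ref{lem:crule2} goes through because any source-to-$x$ path with $\neg fixed[x]$ crosses a fixed/non-fixed edge whose tail $u$ has $C[u]=D[u]\geq minD$. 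Invariant (iii) requires checking the two places a vertex is fixed: in Step 2 this is exactly Lemma~\ref{lem:crauser} (with $threshold = \min\{D[x]+outWeight[x] : \neg fixed[x]\} = \min\{D[x]+w[x,y] : (x,y)\in E, \neg fixed[x]\}$ since $outWeight$ is the min outgoing edge weight); in Step 5 it is immediate from $C[y]=D[y]$ together with invariants (i) and (ii).

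Next I would argue termination and that the fixed point is correct. Since all weights are strictly positive and $D$ only decreases while taking values that are finite sums of edge weights bounded by $cost[x]$, $D$ can change only finitely often, so eventually $changed$ stays false or all discovered vertices are fixed; in either case the loop exits. At exit I must show every reachable $x$ has $D[x]=cost[x]$. If $changed$ became false, then Step 3 performed a full Bellman-Ford relaxation sweep with no improvement, which is the standard Bellman-Ford fixed-point certificate that $D=cost$ on all discovered vertices (and an easy induction on shortest-path hop count shows every reachable vertex is discovered within $n$ iterations, since each iteration's Step 3 discovers all out-neighbors of discovered vertices). If instead the loop exited because all discovered non-fixed vertices became fixed, invariant (iii) gives $D=cost$ directly. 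Unreachable vertices keep $D=\infty$ by initialization, as required.

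For the complexity claims I would simply account for the five steps as the paragraph preceding the theorem already does: Step 1 costs $O(\log\log n)$ time and $O(n)$ work via a doubly-logarithmic tree with cascading~\cite{jaja1992introduction}; Steps 2, 5 cost $O(1)$ time and $O(n)$ work; Steps 3, 4 cost $O(1)$ time and $O(e)$ work on a common-CRCW PRAM (concurrent writes in the $\min$/relaxation updates are resolved arbitrarily, which is fine since we only need the invariant to be re-established, not a particular writer to win). The per-iteration bound is thus $O(\log\log n)$ time and $O(e)$ work, and it remains to bound the iteration count by $n$. For this I would show that after iteration $t$, every vertex whose shortest path from $v_0$ uses at most $t$ edges is fixed: its predecessors on that path are fixed by the inductive hypothesis and have $C=D=cost$, so Step 4's Eqn.~\ref{eqn:crule} update forces $C[x] \geq \min\{C[v]+w[v,x]\} = cost[x] = D[x]$ (the last equality because Step 3 has already relaxed the incoming edge from the fixed predecessor), and Step 5 fixes it. Hence after at most $n-1$ iterations all reachable vertices are fixed and one more iteration exits the loop, giving total time $O(n\log\log n)$ and work $O(ne)$.

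The main obstacle I anticipate is the concurrency bookkeeping in Steps 3 and 4 under the common-CRCW model: multiple edges $(x,y)$ may try to write $D[y]$ or $C[y]$ simultaneously, and one must argue that whatever value survives still satisfies the monotonicity invariants and that the per-iteration progress claim (new vertices discovered, a full relaxation sweep performed) is not compromised. This is routine but needs to be stated carefully — in particular that $Dout[y]$ is kept consistent with $D[y]$, and that reading $minD$/$threshold$ computed in Step 1 against values of $D$ that Step 3 then mutates does not break Lemma~\ref{lem:crauser} (it does not, because $threshold$ is computed \emph{before} any fixing and the relaxations in Step 3 only lower $D$, hence only lower future thresholds, and Lemma~\ref{lem:crauser} was applied with the pre-relaxation threshold). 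Everything else is a direct reuse of Lemmas~\ref{lem:crule}, \ref{lem:crule2}, and~\ref{lem:crauser} together with the classical Bellman-Ford correctness argument.
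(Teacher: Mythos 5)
Your proposal takes the same route as the paper's (very terse) proof: the paper argues exactly that Steps 1--2 maintain $D$ via \cite{crauser1998parallelization} (Lemma~\ref{lem:crauser}), Step 3 is the standard Bellman--Ford relaxation, Step 4 maintains $C$ via Lemma~\ref{lem:crule}, Step 5 maintains $fixed(x) \equiv (C[x]=D[x])$, and the time/work bounds follow from the per-step accounting in the discussion preceding the theorem. You supply substantially more detail on termination and on the CRCW bookkeeping than the paper does, and that extra care is welcome.

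One sub-argument, however, fails as stated: the induction ``after iteration $t$ every vertex whose shortest path uses at most $t$ edges is fixed.'' Step 4 sets $C[x] \geq \min\{C[v]+w[v,x] : (v,x)\in E\}$ with the minimum taken over \emph{all} predecessors, including non-fixed ones whose $C$ values may still be far below their true costs. A vertex $x$ one hop from the source can have an incoming edge from a distant vertex $u$ with $C[u]+w[u,x] < cost[x]$; then $C[x]$ does not reach $D[x]$ and Step 5 does not fix $x$, so the hop-count claim is false in general. The bound of $n$ iterations should instead come from material you already have: either (a) the non-fixed vertex attaining $minD$ satisfies $D[x]=minD\leq threshold$ because $outWeight$ is strictly positive, so Step 2 fixes at least one new vertex in every iteration in which a discovered non-fixed vertex exists; or (b) Step 3 performs a full Bellman--Ford sweep each iteration, so $D$ converges to $cost$ within $n-1$ iterations and the $changed$ flag then exits the loop --- an argument you already invoke for termination. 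Either repair is a one-line substitution, and the rest of your argument stands.
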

\begin{proof}
We first show the correctness of $SP_4$.
It is sufficient to show that the while loop maintains the invariant that $D[x]$ is an upper bound and $C[x]$ is a lower bound on the cost
to the vertex $x$. Steps 1 and 2 correctly maintain $D$ follows from \cite{crauser1998parallelization}. 
Step 3 is the standard Bellman-Ford rule and it correctly maintains $D$. Step 4 correctly maintains $C$ due to Lemma \ref{lem:crule}.
Step 4, simply maintains the invariant that $fixed(x) \equiv (C[x] = D[x])$.

The time and work complexity follows from the earlier discussion.
%Finally, in a random graph, it follows from \cite{crauser1998parallelization} that even with just the idea of threshold, the total number of while
%iterations is $O(\sqrt{n})$ with high probability.
\end{proof}

%Observe that if we include only step 2 in $SP_4$, we get the parallel Bellman-Ford algorithm. If we include step 1 with  just $minD$,
%step 2 with $minD$ instead of $threshold$, and restrict $x$ to fixed but not explored in step 3, we get a parallel version of Dijkstra's algorithm.
%If we include only steps 1, 2 and 3 where step 3 is further
%restricted to $x$ that are fixed and not explored, we get a variant of Dijkstra's algorithm with the out-version from \cite{crauser1998parallelization}. 
%If we restrict step 3 to $fixed[x]$ but not explored, and in step 4 restrict $y$ to 

\section{Conclusions and Future Work}
In this paper, we have presented four algorithms for the shortest path problem. 
We present algorithms $SP_1$ and $SP_2$ that reduce the number of heap operations required by Dijkstra's algorithm and
allow exploration of multiple vertices in parallel thereby reducing its sequential bottleneck.
We also present algorithms $SP_3$ and $SP_4$ that require more work than Dijkstra's algorithm but reduce the sequential bottleneck even further.
These algorithms are the first ones that exploit 
both upper and lower bounds on the cost of the shortest path to increase the number of vertices that can be explored in parallel.
%It would be interesting to analyze the behavior of the proposed algorithms on random graphs, or analyze their average case time and work
%complexity. 
Extending these algorithms for distributed shared memory is a future research direction.

\bibliography{refs,refs2,refs3,sssp}

\end{document}